\newcommand{\id}{\ensuremath{\mathbbm{1}}}
\newcommand{\reals}{\ensuremath{\mathbb{R}}}
\newcommand{\naturals}{\ensuremath{\mathbb{N}}}
\newcommand{\prob}{\ensuremath{\mathbf{P}}}
\newcommand{\expect}{\ensuremath{\mathbb{E}}}
\newtheorem{lemma}{Lemma}
\newtheorem{thm}{Theorem}
\newcommand{\catalog}{\mathcal{C}}
\newcommand{\capacity}{\ensuremath{c}}
\newcommand{\source}{\ensuremath{\mathcal{S}}}
\newcommand{\ppath}{\ensuremath{p}}
\newcommand{\requests}{\ensuremath{\mathcal{R}}}
\newcommand{\feasibledomain}{\mathcal{D}}
\newcommand{\argmax}{\mathop{\arg\max\,}}
\newcommand{\trace}{\mathop{\mathtt{trace}}}
\newcommand{\techrep}[2]{#1}
\newcounter{packednmbr}
\newenvironment{packedenumerate}{\begin{list}{\thepackednmbr.}{\usecounter{packednmbr}\setlength{\itemsep}{0pt}\addtolength{\labelwidth}{-5pt}\setlength{\leftmargin}{\labelwidth}\setlength{\listparindent}{\parindent}\setlength{\parsep}{0pt}\setlength{\topsep}{3pt}}}{\end{list}}
\newenvironment{packeditemize}{\begin{list}{$\bullet$}{\setlength{\itemsep}{0pt}\addtolength{\labelwidth}{-5pt}\setlength{\leftmargin}{\labelwidth}\setlength{\listparindent}{\parindent}\setlength{\parsep}{0pt}\setlength{\topsep}{3pt}}}{\end{list}}
\newcommand{\supp}{\mathtt{supp}}
\newcommand{\CG}{\textsc{MaxCG}\xspace}
\newcommand{\rCG}{\textsc{R-MaxCG}\xspace}
\newcommand{\iCG}{\textsc{I-MaxCG}\xspace}
\begin{document}

\makeatletter
\def\@copyrightspace{\relax}
\makeatother

\clubpenalty=10000 
\widowpenalty = 10000

\title{Adaptive Caching Networks\\ with Optimality Guarantees}

\numberofauthors{1} \author{
\alignauthor
Stratis Ioannidis and Edmund Yeh\\
     \affaddr{E.C.E., Northeastern University}\\
       \affaddr{360 Huntington Ave, 409DA}\\
       \affaddr{Boston, MA, 02115}\\
       \email{\{ioannidis,eyeh\}@ece.neu.edu}
}

\maketitle

\begin{abstract}
We study the problem of optimal content placement over a network of caches, a problem naturally arising in several networking applications, including ICNs, CDNs, and P2P systems. Given  a demand of content request rates and paths followed, we wish to determine the content placement that maximizes the \emph{expected caching gain}, i.e., the reduction of routing costs due to intermediate caching. The offline version of this problem is NP-hard and, in general, the demand and topology may  be a priori unknown. Hence, a distributed, adaptive, constant approximation content placement algorithm is desired.  We show that path replication, a simple algorithm frequently encountered in literature,  can be arbitrarily suboptimal when combined with traditional eviction policies, like LRU, LFU, or FIFO. We propose a distributed, adaptive algorithm that performs stochastic gradient ascent on a concave relaxation of the expected caching gain, and constructs a probabilistic content placement within  $1-1/e$ factor from the optimal, in expectation.  Motivated by our analysis, we also propose a novel greedy eviction policy to be used with path replication, and show through numerical evaluations that both algorithms significantly outperform path replication with traditional eviction policies  over a broad array of network topologies.

 \end{abstract}

\section{Introduction}\label{sec:intro}
We consider a \emph{caching network}, i.e., a network of caches, each capable of storing  a constant number of content items. Certain nodes in the network  act as designated sources for content, and are guaranteed to always store specific items. Any node can generate a request for an item, which is forwarded over a fixed path toward a designated source. However, requests need not reach the end of this path: forwarding stops upon reaching a node that has cached the requested item. Whenever such a ``cache hit'' occurs,  the item is sent over the reverse path towards the node that requested it.

Our goal is  to \emph{allocate items to caches optimally}, i.e.,  in a way that minimizes the aggregate routing costs due to content transfers  across the network. 
This abstract problem   naturally captures---and is directly motivated by---several important real-life networking applications. These include content and information-centric networks (CCNs/ICNs) \cite{jacobson2009networking, rosensweig2013steady,rossi2011caching }, core and edge content delivery networks (CDNs) \cite{borst2010distributed,dehghan2014complexity}, micro/femtocell networks \cite{shanmugam2013femtocaching}, and peer-to-peer networks \cite{lv2002search}, to name a few. For example, in  hierarchical CDNs, requests for content can be served by intermediate caches placed at the network's edge, e.g., within the same administrative domain (e.g., AS or ISP) as the originator of the request; if, however, content is not cached locally, the request can be forwarded to a core server, that acts as a cache of last resort. Similarly, in CCNs, named data items are stored at designated sources, and requests for named content are forwarded to these sources. Intermediate routers can cache items carried by responses, and subsequently serve future requests. Both settings naturally map to the abstract problem we study here. 

In these and many other applications, it is natural to assume that the \emph{demand}, determined by, e.g., how frequently items are requested, and which paths requests follow, is dynamic and not a priori known. For this reason, adaptive algorithms, that (a) discover an optimal item placement without prior knowledge of this demand, and (b) adapt to its changes, are desired. In addition, for large networks, comprising  different administrative domains, collecting information at a single centralized location may be impractical.  Distributed algorithms,   in which a node's caching decisions  rely only on locally available information,  allow the network to scale and are thus preferable.  

A simple, elegant algorithm that attains both properties, and is often encountered in the literature of the different applications mentioned above, is  \emph{path replication}~\cite{cohen2002replication,lv2002search,laoutaris2004meta,rossi2011caching,zhou2004second,che2002hierarchical,jacobson2009networking}. Cast in the context of our problem, the algorithm roughly proceeds as follows: when an item traverses the reverse path towards a node that requested it, it is cached by every intermediate node encountered. When caches are full, evictions  are typicaly implemented using traditional policies, like LRU, LFU, FIFO, etc.

This algorithm is intuitively appealing in its simplicity, and it is clearly both distributed and adaptive to demand. Unfortunately, the resulting allocations of items to caches  come with no guarantees: we show in this paper  that path replication combined with \emph{any} of the above traditional eviction policies is arbitrarily suboptimal. To address this, our main goal is to design a \emph{distributed, adaptive caching algorithm with provable performance guarantees}. To that end, we make the following contributions:
\begin{packeditemize}
\item We set the problem of optimal caching network design on a formal foundation. We do so by rigorously defining the problem of finding an \emph{allocation}, i.e., a mapping of items to network caches, that maximizes the \emph{expected caching gain}, i.e., the routing cost reduction achieved due to caching at intermediate nodes. The deterministic, combinatorial version of the problem is NP-hard, though it is approximable within a $1-1/e$ factor \cite{shanmugam2013femtocaching,ageev2004pipage}.
\item We prove that the classic path replication algorithm, combined with LRU, LFU, or FIFO eviction policies, leads to allocations that are \emph{arbitrarily suboptimal.} Our result extends to any  \emph{myopic} strategy, that ignores costs incurred upstream due to cache misses.
\item We construct a distributed, adaptive algorithm that converges to a probabilistic allocation of items to caches that is  within a $1-1/e$ factor from the optimal, \emph{without} prior knowledge of the demand (i.e., items requested and routes followed) or the network's topology. The algorithm performs a projected gradient ascent over a concave objective approximating the expected caching gain. 
\item Motivated by this construction, we also propose a new eviction policy to be used  with path replication: whenever an item is back-propagated over a path, the nodes on the path have the opportunity to store it and evict an existing content, according to a greedy policy we design.
\item We show through extensive simulations over a broad array of both synthetic and real-life topologies that both algorithms significantly outperform traditional eviction policies. In all cases studied, the greedy heuristic performs exceptionally well, achieving at least 95\% of the gain achievable by the projected gradient ascent algorithm, that comes with provable guarantees.
\end{packeditemize}
Our analysis requires overcoming several technical hurdles. To begin with, constructing our distributed algorithm, we show that it is always possible to construct a probabilistic allocation, mapping items to caches, that satisfies capacity constraints \emph{exactly}, from a probabilistic allocation that satisfies capacity constraints only in \emph{expectation}. Our construction, which is interesting in its own right, is simple and intuitive, and can be performed in polynomial time. Moreover, the concave relaxation we study is non-differentiable; this introduces additional technical difficulties when performing projected gradient ascent, which we address.

The remainder of this paper is structured as follows. We review related work in Section~\ref{sec:related}. We formally introduce our problem  in Section~\ref{sec:model}, and discuss offline algorithms for it solution in Section~\ref{sec:pipage}. Our main results on distributed, adaptive algorithms in Section~\ref{sec:adaptive}; we also prove an equivalence theorem on several variants of the expected caching gain maximization problem in Section~\ref{sec:equivalence}. Finally, Section~\ref{sec:numerical} contains our evaluations,  and we conclude in Section~\ref{sec:conclusions}.

\section{Related Work}\label{sec:related}
Path replication is best known as the de facto caching mechanism in content-centric networking \cite{jacobson2009networking}, but has a long history in  networking literature. In their seminal paper, Cohen and Shenker~\cite{cohen2002replication} show that path replication, combined with constant rate of evictions leads to an allocation that is optimal, in equilibrium, when nodes are visited through uniform sampling. This is one of the few results on path replication's optimality (see also \cite{ioannidis2009absence}); our work (c.f., Theorem~\ref{subopt}) proves that, unfortunately, this result does not generalize to routing over arbitrary topologies. Many studies provide numerical evaluations of path replication combined with simple eviction policies, like LRU, LFU, etc., over different topologies (see, e.g., \cite{laoutaris2004meta,rossi2011caching,zhou2004second}). In the context of  CDNs and ICNs, Rosensweig et al.~\cite{rosensweig2013steady} study conditions under which path replication with LRU, FIFO, and other variants, under fixed paths, lead to an ergodic chain. Che et al. \cite{che2002hierarchical}  approximate the LRU policy hit probability through a TTL-based eviction scheme; this approach that has been refined and extended in several recent works to model many traditional eviction policies \cite{fricker2012versatile,martina2014unified,berger2014exact,fofack2012analysis}; alternative analytical models are explored in \cite{gallo2012performance,carofiglio2011modeling}. None of the above works however study optimality issues or guarantees. 

Several papers have studied complexity and optimization issues in offline caching problems \cite{baev2008approximation,bartal1995competitive,fleischer2006tight,shanmugam2013femtocaching,applegate2010optimal,borst2010distributed}. With the exception of \cite{shanmugam2013femtocaching}, these works model the network as a bipartite graph:  nodes generating requests connect directly to caches, and demands are satisfied a single hop (if at all). 
Beyond content placement, Deghan et al.~\cite{dehghan2014complexity} jointly optimize caching and routing in this bipartite setting.
In general, the \emph{pipage rounding} technique of Ageev and Sviridenko~\cite{ageev2004pipage} (see also \cite{calinescu2007maximizing,vondrak2008optimal}) yields again a constant approximation algorithm in the bipartite setting, while approximation algorithms are also known for several variants of this problem~\cite{baev2008approximation,bartal1995competitive,fleischer2006tight,borst2010distributed}. 
 
Among these papers on offline caching, the recent paper by Shanmugam et al.~\cite{shanmugam2013femtocaching} is closest to the problem we tackle here; we rely and expand upon this work.  Shanmugam et al.~consider  wireless  nodes that download items from (micro/femtocel) base stations in their vicinity. Base stations are visited in a predefined order (e.g., in decreasing order of connection quality), with the wireless service acting as a ``cache of last resort''. This can be cast as an instance of our problem, with paths defined by the traversal sequence of base-stations, and the network graph defined as their union.  
The authors show that determining the optimal allocation is NP-hard, and that an $1-1/e$ approximation algorithm can be obtained through pipage rounding; we review these results, framed in the context of our problem, in Section~\ref{sec:pipage}.

All of the above complexity papers \cite{baev2008approximation,bartal1995competitive,fleischer2006tight,ageev2004pipage}, including \cite{shanmugam2013femtocaching}, study \emph{offline}, \emph{centralized} versions of their respective caching problems. Instead, we focus on providing \emph{adaptive, distributed} algorithms, that operate without any prior knowledge of the demand or topology. In doing so, we we produce a distributed algorithms for (a) performing projected gradient ascent over the concave objective used in pipage rounding, and (b) rounding the objective across nodes; combined, these lead to a distributed, adaptive caching algorithm with provable guarantees~(Thm.~\ref{maincor}). 

Adaptive replication schemes exist for asymptotically large, single-hop CDNs, \cite{leconte2015designing, leconte2012bipartite,ioannidis2010distributed},  but these works do not explicitly model a graph structure. The dynamics of the greedy path replication algorithm we propose in Section~\ref{sec:greedy} resemble the greedy algorithm used to make caching decisions in~\cite{ioannidis2010distributed}, though our objective is different, and we cannot rely on a mean-field approximation in our argument. The dynamics are also similar (but not identical) to the dynamics of the ``continuous-greedy'' algorithm used for submodular maximization~\cite{vondrak2008optimal} and the Frank-Wolfe algorithm~\cite{clarkson2010coresets}; these can potentially serve as a basis for formally establishing its convergence, which we leave as future work.

The path replication eviction policy we propose also relates to greedy maximization techniques used in throughput-optimal backpressure algorithms---see, e.g., Stolyar~\cite{stolyar2005maximizing} and, more recently, Yeh et al.~\cite{yeh2014vip}, for an  application to throughput-optimal caching in ICN networks. We minimize routing costs and  ignore throughput issues, as we do not model congestion. Investigating how to combine these two research directions, capitalizing on commonalities between these greedy algorithms, is an interesting open problem.

\section{Model}\label{sec:model}
We consider a network of caches, each capable of storing at most a constant number of content items. Item requests are routed over given (i.e., fixed) routes, and are satisfied upon hitting the first cache that contains the requested item. Our goal is to determine an item allocation (or, equivalently, the contents of each cache), that minimizes the aggregate routing cost.  We describe  our model in detail below.

\subsection{Cache Contents and Designated Sources}
We represent a network as a directed graph $G(V,E)$.  
Content items (e.g., files, or file chunks) of equal size   are to be distributed across network nodes. In particular, each node is associated with a cache, that can store a finite number of items. We denote by  $\catalog$ the set of content items available, i.e., the \emph{catalog}, and assume that $G$ is symmetric, i.e.,  $(i,j)\in E$ if and only if $(j,i)\in E$. 

We denote by $\capacity_v\in \naturals$ the cache capacity at node $v\in V$: exactly $\capacity_v$ content items are stored in this node. 
For each node $v\in V$, we denote by $$x_{vi}  \in \{0,1\},\quad \text{ for } v\in V, i \in \catalog, $$
the variable indicating whether $v$ stores content item $i$. We denote by $X=[x_{vi}]_{v\in V,i \in \catalog}\in \{0,1\}^{|V|\times |\catalog|}$ the matrix whose rows comprise the indicator variables of each node. We refer to $X$ as the \emph{global allocation strategy} or, simply, \emph{allocation}.
Note that the capacity constraints imply that
$$\textstyle\sum_{i\in \catalog} x_{vi} = c_v, \quad\text{for all }v\in V.$$

We associate each item $i$ in the catalog $\catalog$  with a fixed set of \emph{designated sources} $\source_i\subseteq V$, that always store $i$. That is:
$$x_{vi} = 1 \text{, for all } v\in\source_i.  $$
Without loss of generality, we assume that the sets $\source_i$ are feasible, i.e.,
$\sum_{i: v\in \source_i} x_{vi} \leq c_v,$ for all $v\in V.$

\subsection{Content Requests and Routing Costs}
The network serves content requests  routed over the graph $G$.  A request is determined by (a) the  item requested, and (b) the path that the request follows. Formally, a \emph{path} $\ppath$ of length $|p|=K$ is a sequence $$\{p_1,p_2,\ldots,p_K\}$$ of nodes  $p_k\in V$ such that edge $(p_k,p_{k+1})$ is in $E$, for every $k\in \{1,\ldots,|p|-1\}$. Under this notation, a \emph{request} $r$ is a pair $(i,p)$ where $i\in \catalog$ is the item requested, and $\ppath$ is the path traversed to serve this request.
 We say  that a request $(i,p)$  is \emph{well-routed} if the following  natural assumptions hold: 
\begin{packeditemize}
\item[(a)] The path $p$ is simple, i.e., it contains no loops.
\item[(b)] The terminal node in the path  is a designated source node for $i$, i.e., if $|p|=K$, 
$p_K \in S_i.$
\item[(c)] No other node in the path is a designated source node for $i$, i.e., if $|p|=K$,
$p_k\notin S_i$, for $k=1,\ldots,K-1.$ 
\end{packeditemize}
 We denote by $\requests$  the set  of all requests. Without loss of generality, we henceforth assume that all requests in $\requests$ are well-routed.  Moreover, requests for each element $\requests$ arrive according to independent Poisson processes; we denote by $\lambda_{(i,p)}>0$ the arrival rate of a request $(i,p)\in \requests$.

An incoming request $(i,p)$ is routed over the network $G$  following path $p$, until it reaches a cache that stores $i$. At that point, a \emph{response} message is generated, carrying the item requested. The response is propagated over $p$ in the reverse direction, i.e., from the node where the ``cache hit'' occurred, back to the first node in $p$, from which the request originated. 
To capture  costs  (e.g., delay, money, etc.), we associate a \emph{weight} $w_{ij}\geq 0$ with each  edge $(i,j)\in E$, representing the cost of transferring an item across this edge.
 We assume  that (a) costs are solely due to response messages that carry an item, while request forwarding costs are negligible, and (b) requests and downloads are instantaneous (or, occur at a smaller timescale compared to the request arrival process).  We do not assume that $w_{ij}=w_{ji}$.

When a request $(i,p)\in\requests$ is well-routed, the cost for serving it can be written concisely in terms of the  allocation:
\begin{align}C_{(i,p)} = C_{(i,p)}(X) = \sum_{k=1}^{|p|-1} w_{p_{k+1}p_k} \prod_{k'=1}^k (1-x_{p_{k'}i}).\label{detcost}\end{align}
Intuitively, this formula states that $C_{(i,p)}$ includes the cost of an edge $(p_{k+1},p_k)$ in the path $p$ if \emph{all} caches preceding this edge in $p$ do not store $i$. If the request is well-routed, no edge (or cache) appears twice in \eqref{detcost}. Moreover, the last cache in $p$ stores the item, so the request is always served.

\subsection{Maximizing the Caching Gain}

As usual, we seek an allocation that minimizes the aggregate expected cost. In particular, let $C_0$ be the expected cost per request, when requests are served by the designated sources at the end of each path, i.e., 
\begin{align*}
C_0 =   \textstyle\sum_{(i,p) \in \requests }\lambda_{(i,p)}  \sum_{k=1}^{|p|-1} w_{p_{k+1}p_k}.
\end{align*}
Since requests are well-routed, $C_0$ is an upper bound on the expected routing cost. 
Our objective is to determine a feasible allocation $X$ that maximizes the \emph{caching gain}, i.e., the expected cost reduction attained due to caching at intermediate nodes, defined as:
\begin{align}
\begin{split}
F(X)&=  C_0-\textstyle\sum_{(i,p) \in \requests }\lambda_{(i,p)}  C_{(i,p)}(X) \\
&= \sum_{(i,p) \in \requests }\!\!\lambda_{(i,p)}\!\!  \sum_{k=1}^{|p|-1} \!\!w_{p_{k+1}p_k}\left(1\!-\! \prod_{k'=1}^k (1\!-\!x_{p_{k'}i}) \right) 
\end{split}\label{gain}
\end{align}
In particular, we seek solutions to  the following problem:
\begin{subequations}\label{deterministic}
\center{\CG}\vspace*{-0.5em}
\begin{align}
\text{Maximize:}& \quad F(X) \\
\text{subj.~to:}& \quad X\in \feasibledomain_1
\end{align}
\end{subequations}
where 
$\feasibledomain_1$ is the set of matrices $X\in  \reals^{|V|\times |\catalog|}$ satisfying the capacity, integrality, and source constraints, i.e.:
\begin{subequations}\label{integralconstr}
\begin{align}
& \textstyle\sum_{i\in\catalog} x_{vi} = \capacity_v, & &\text{ for all }v\in V \label{xcapacity}\\
&x_{vi}\in \{0,1\}, &&\text{ for all }v\in V, i\in\catalog, \text{ and} \label{xintegrality}\\
&x_{vi} = 1, &&\text{ for all }v\in \source_i\text{ and all }i\in\catalog. \label{xsource}   	
\end{align}
\end{subequations}
Problem \CG is NP-hard (see Shanmugam et al.~\cite{shanmugam2013femtocaching} for a reduction from the 2-Disjoint Set Cover Problem). Our objective is to solve \CG using a \emph{distributed}, \emph{adaptive} algorithm, that produces an allocation within a constant approximation of the optimal, \emph{without} prior knowledge of the network topology, edge weights, or the demand.

\begin{table}[!t]
\caption{Notation Summary}
\begin{small}
\begin{tabular}{cl}
\hline
$G(V,E)$ & Network graph, with nodes $V$ and edges $E$\\
$\catalog$ & Item catalog\\
$c_v$ & Cache capacity at node $c\in V$\\
$w_{uv}$ & Weight of edge $(u,v)\in E$\\
$\requests$ & Set of requests $(i,p)$, with $i\in\catalog$ and $p$ a path\\
$\lambda_{(i,p)}$ & Rate of request $(i,p)\in \requests$ \\
$x_{vi}$ & Variable indicating $v\in V$ stores $i\in\catalog$ \\
$y_{vi}$ & Marginal probability that $v\in V$ stores $i\in\catalog$\\
$X$ & $|V|\times |\catalog|$ matrix of $x_{vi}$, for $v\in V$, $i \in \catalog$\\
$Y$ &  $|V|\times |\catalog|$ matrix of marginals $y_{vi}$, $v\in V$, $i \in \catalog$ \\
$\feasibledomain_1$ & Set of feasible allocations $X\in \{0,1\}^{|V|\times |\catalog|}$\\
$\feasibledomain_2$ & Convex hull of $\feasibledomain_1$\\
$F$ & The expected caching gain \eqref{gain} in $\feasibledomain_1$, \\&and its multi-linear relaxation \eqref{equality} in $\feasibledomain_2$\\
$L$ & The concave approximation \eqref{relaxedobjective} of $F$\\	
\hline
\end{tabular}
\end{small}
\end{table}

\section{Pipage Rounding}\label{sec:pipage}
Before presenting our distributed, adaptive algorithm for solving \CG, we first discuss how to obtain a constant approximation solution in polynomial time in a \emph{centralized}, \emph{offline} fashion. To begin with, \CG is a submodular maximization problem under matroid constraints: hence, a solution within a 1/2 approximation from the optimal can be constructed by a greedy algorithm.\footnote{Starting from items placed only at designated sources, this algorithm iteratively adds items to caches, selecting at each step a feasible assignment $x_{vi}=1$ that leads to the largest increase in the caching gain.} The solution we present below, due to Shanmugam et al.~\cite{shanmugam2013femtocaching}, improves upon this ratio using  a technique called \emph{pipage rounding}~\cite{ageev2004pipage}. 
In short, the resulting approximation algorithm consists of two steps: (a) a  \emph{convex relaxation} step, that relaxes the integer program to a convex optimization problem, whose solution is within a constant approximation from the optimal, and (b) a \emph{rounding} step, in which the (possibly) fractional solution is rounded to produce a solution to the original integer program. The convex relaxation plays an important role in our distributed, adaptive algorithm; as a result, we briefly overview pipage rounding as applied to  \CG below, referring the interested reader to \cite{shanmugam2013femtocaching,ageev2004pipage} for further details.

\noindent\textbf{Convex Relaxation}. To construct a convex relaxation of \CG, suppose that variables $x_{vi}$, $v\in V$, $i\in \catalog$, are independent Bernoulli random variables.  Let $\nu$ be the corresponding joint probability distribution defined over matrices in $\{0,1\}^{|V|\times|\catalog|}$, and denote by $\prob_{\nu}[\cdot]$, $\expect_{\nu}[\cdot]$ the probability and expectation w.r.t.~$\nu$, respectively. Let  $y_{vi}$, $v\in V$, $i\in \catalog$, be the (marginal) probability that $v$ stores $i$, i.e.,
\begin{align}\label{marginals}y_{vi} = \prob_\nu[x_{vi}=1] = \expect_\nu[x_{vi}].\end{align}
Denote by  $Y=[y_{vi}]_{v\in V,i\in\catalog}\in  \reals^{|V|\times |\catalog|}$  the matrix comprising the marginal probabilities \eqref{marginals}.  
Then, for $F$  given by~\eqref{gain}:
\begin{align}
\expect_\nu[F(X)]& = \!\! \sum_{(i,p) \in \requests }\!\!\lambda_{(i,p)}\!\!  \sum_{k=1}^{|p|-1}\!\! w_{p_{k+1}p_{k}}\!\!\left(1\!-\!\expect_\nu \left [ \prod_{k'=1}^k (1\!-\!x_{p_{k'}i}) \right]\right)\nonumber\\
& \stackrel{}{=} \!\!\sum_{(i,p) \in \requests }\!\!\lambda_{(i,p)}\!\!  \sum_{k=1}^{|p|-1}\!\! w_{p_{k+1}p_{k}}\!\!\left(1\!-\! \!\prod_{k'=1}^k \!\! \left(1\!-\!\expect_\nu\left [x_{p_{k'}i}\right]\right)\right) \nonumber \\
& = F(Y).\label{equality}
\end{align}
Note that the second equality holds by independence, and the fact that path $p$ is simple (no node appears twice). This extension of $F$ to the domain $[0,1]^{|V|\times|\catalog|}$ is known as the \emph{multi-linear relaxation} of $F$. Consider now the problem:
\begin{subequations}\label{indep}
\begin{align}
\text{Maximize:}& \quad F(Y)\\
\text{subject to:}& \quad Y\in \feasibledomain_2,
\end{align}
\end{subequations}
where
$\feasibledomain_2$ is the set of matrices $Y=[y_{vi}]_{v\in V,i\in\catalog}\in  \reals^{|V|\times |\catalog|}$ satisfying the capacity and source constraints, with the integrality constraints relaxed, i.e.:
\begin{subequations}\label{relax}
\begin{align}
& \textstyle\sum_{i\in\catalog} y_{vi} = \capacity_v, &&\text{ for all }v\in V\\
&y_{vi}\in [0,1],&&\text{ for all }v\in V, i\in\catalog, \text{ and}\\
&y_{vi} = 1, &&\text{ for all }v\in \source_i\text{ and all }i\in\catalog.   	
\end{align}
\end{subequations}
Note that an allocation $X$ sampled from  a $\nu$ with marginals  $Y\in\feasibledomain_2$ only satisfies the capacity constraints \emph{in expectation}; hence, $X$ may not be in $\feasibledomain_1$.  Moreover, if $X^*$ and $Y^*$ are optimal solutions to \eqref{deterministic} and  \eqref{indep}, respectively, then
\begin{align}F(Y^*)\geq F(X^*),\label{trivial}\end{align}
as \eqref{indep} maximizes the same function over a larger domain.

The multi-linear relaxation \eqref{relax} is not concave, so \eqref{indep} is \emph{not} a convex optimization problem.
 Nonetheless,  \eqref{indep} can  be approximated as follows. Define $L:\feasibledomain_2\to \reals$ as: 
\begin{align}\label{relaxedobjective}
L(Y) =  \sum_{(i,p) \in \requests }\lambda_{(i,p)}  \sum_{k=1}^{|p|-1} w_{p_{k+1}p_k} \min \{1,\sum_{k'=1}^k y_{p_{k'}i}\}.
\end{align}
Note  that $L$ is concave, and consider now the problem:
\begin{subequations}\label{convex}
\begin{align}
\text{Maximize:} & \quad L(Y)\\
\text{Subject to:} &\quad Y\in \feasibledomain_2.
\end{align} 
\end{subequations}
Then, the optimal value of \eqref{convex} is guaranteed to be within a constant factor from the optimal value of \eqref{indep}--and, by \eqref{trivial}, from the optimal value of \eqref{deterministic} as well. In particular: 
\begin{thm}\cite{ageev2004pipage,shanmugam2013femtocaching}\label{approximation}
Let $Y^*$, and $Y^{**}$ be optimal solutions to  \eqref{indep} and \eqref{convex}, respectively. Then,
\begin{align} F(Y^*)\geq  F(Y^{**}) \geq (1-\frac{1}{e}) F(Y^*). \label{sandwitch}\end{align}
\end{thm}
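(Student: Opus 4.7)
The left inequality is immediate: since $Y^*$ maximizes $F$ over $\feasibledomain_2$ and $Y^{**}\in\feasibledomain_2$, we have $F(Y^*)\ge F(Y^{**})$ with no work required.

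For the right inequality, the plan is to establish a pointwise two-sided comparison between $F$ and $L$ on $\feasibledomain_2$, namely
\[
\bigl(1-\tfrac{1}{e}\bigr) L(Y) \;\le\; F(Y) \;\le\; L(Y) \qquad \text{for all }Y\in\feasibledomain_2,
\]
and then chain the resulting inequalities. Once both sides are in hand, using the fact that $Y^{**}$ maximizes $L$ over $\feasibledomain_2$ I get
\[
F(Y^{**}) \;\ge\; \bigl(1-\tfrac{1}{e}\bigr)L(Y^{**}) \;\ge\; \bigl(1-\tfrac{1}{e}\bigr)L(Y^*) \;\ge\; \bigl(1-\tfrac{1}{e}\bigr)F(Y^*),
\]
which is exactly the desired claim.

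To prove the pointwise comparison, I would exploit that both $F$ and $L$ are nonnegative-coefficient sums over the same index set, with the shared coefficients $\lambda_{(i,p)} w_{p_{k+1}p_k}\ge 0$: it suffices to compare the corresponding summands. Writing $y_{k'}:=y_{p_{k'}i}\in[0,1]$ and $s:=\sum_{k'=1}^k y_{k'}$, the job reduces to the scalar inequality
\[
\bigl(1-\tfrac{1}{e}\bigr)\min\{1,s\} \;\le\; 1-\prod_{k'=1}^k(1-y_{k'}) \;\le\; \min\{1,s\}.
\]
The upper bound combines the trivial bound by $1$ with the union-bound-type inequality $1-\prod(1-y_{k'})\le\sum y_{k'}$, which follows from an easy induction on $k$. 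The lower bound is where the constant $1-1/e$ enters, and is the main technical point. AM--GM gives $\prod(1-y_{k'})\le(1-s/k)^k\le e^{-s}$, hence $1-\prod(1-y_{k'})\ge 1-e^{-s}$. I then split on $s$: if $s\ge 1$, then $1-e^{-s}\ge 1-1/e=(1-1/e)\min\{1,s\}$; if $s\in[0,1]$, concavity of $z\mapsto 1-e^{-z}$ places its graph above the chord joining $(0,0)$ and $(1,\,1-1/e)$, giving $1-e^{-s}\ge(1-1/e)s=(1-1/e)\min\{1,s\}$. The hardest step is really this chord-versus-concavity argument, but it is a short calculation rather than a deep obstacle; the remainder is bookkeeping to lift the scalar bound to the functional inequality via nonnegativity of the coefficients, and then chaining through $L$ as above.
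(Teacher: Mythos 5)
Your proof is correct and follows essentially the same route as the paper: establish the pointwise sandwich $(1-\tfrac{1}{e})L(Y)\le F(Y)\le L(Y)$ on $\feasibledomain_2$ and then chain through the two optimality conditions. The only difference is that where the paper cites Goemans--Williamson for the key scalar inequality $1-\prod(1-y_{k'})\ge(1-(1-1/k)^k)\min\{1,s\}$ and then applies $(1-1/k)^k\le 1/e$, you supply a self-contained derivation via AM--GM ($\prod(1-y_{k'})\le e^{-s}$) followed by the chord-versus-concavity argument for $1-e^{-s}\ge(1-\tfrac{1}{e})\min\{1,s\}$ --- a slightly less tight but perfectly sufficient bound --- and likewise replace the paper's Jensen step for the upper bound with the direct union bound $1-\prod(1-y_{k'})\le\sum y_{k'}$; both substitutions are sound.
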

\techrep{For completeness, we prove this in Appendix~\ref{proofofapprox}.}{A proof can be found in \cite{arxiv}.} Problem \eqref{convex} is convex; in fact, by introducing auxiliary variables, it can be converted to a linear program and, as such, the minimizer $Y^{**}$ can be computed in strongly polynomial time \cite{ageev2004pipage}. 
\noindent\textbf{Rounding.} To produce a constant approximation solution to \CG, the solution $Y^{**}$ of \eqref{convex} is rounded. The rounding scheme is based on the following property of $F$: given a fractional solution $Y\in \feasibledomain_2$, there is always a way to transfer mass between any two fractional variables $y_{vi}$, $y_{vi'}$ so that\footnote{Properties (a)-(c)  are a direct consequence of the convexity of $F$ when restricted to any two such variables, a property that Ageev and Sviridenko refer to as \emph{$\epsilon$-convexity} \cite{ageev2004pipage}.} (a) at least one of them becomes 0 or 1, (b) the resulting $Y'$ under this transformation is feasible, i.e., $Y'\in \feasibledomain_2$, and (c) the caching gain at $Y'$ is at least as good as at $Y$, i.e.,  $F(Y')\geq F(Y)$. 

This suggests the following iterative algorithm for producing an integral solution. 
\begin{packedenumerate}
\item Start from  $Y^{**}$, an optimal solution  to the problem \eqref{convex}.
\item  If the solution is fractional, find two variables  $y_{vi}$, $y_{vi'}$ that are fractional: as capacities are integral, if a fractional variable exists, then there must be at least two.
\item Use the rounding described by properties (a)-(c) to transform (at least) one of these two variables to either 0 or 1, while increasing the caching gain $F$.
\item Repeat steps 2-3 until there are no fractional variables.
\end{packedenumerate}
 As each  rounding step reduces the number of fractional variables by at least 1, the above algorithm concludes in at most $|V|\times |\catalog|$ steps, producing an integral solution $X'\in \mathcal{D}_1$. Since each rounding step can only increase $F$, $X'$ satisfies:
$$F(X')\geq F(Y^{**}) \stackrel{\eqref{sandwitch}}{\geq} (1-\frac{1}{e})F(Y^*) \stackrel{\eqref{trivial}}{\geq} (1-\frac{1}{e}) F(X^*),$$
i.e., is  a $(1-\frac{1}{e})$-approximate solution to \CG.

\section{Distributed Adaptive Caching}\label{sec:adaptive}

\subsection{ Path Replication Suboptimality }\label{sec:subopt}

Having discussed how to sove \CG offline, we  turn our attention to distributed, adaptive algorithms. We begin with a negative result: the simple path replication algorithm described in the introduction, combined with  LRU, LFU, or FIFO evictions, is arbitrarily suboptimal.

We prove this below using the simple star network illustrated in Figure~\ref{example}, in which only one file can be cached at the central node. Intuitively, when serving requests from the bottom node,  path replication with, e.g., LRU evictions, alternates between storing either of the two files. However,  the optimal allocation is to permanently store file 2, (i.e., $x_{v2}=1$): for large $M$, this allocation leads to a caching gain arbitrarily larger than the one under path replication and LRU. As $c_v=1$,   LFU and FIFO coincide with LRU, so the result extends to these policies as well.

Formally, assume that request traffic is generated only by node $u$:  requests for items 1 and 2 are routed through paths $p_1=\{u,v,s_1\}$ and $p_2=\{u,v,s_2\}$  passing through node $v$, that has a cache of capacity $c_v=1$. Let  $\lambda_{(1,p_1)}=1-\alpha$, $\lambda_{(2,p_2)}=\alpha$, for $\alpha\in(0,1)$.
As illustrated in Figure~\ref{example}, the routing cost  over edges $(s_1,u)$ and $(v,u)$ is 1, and the routing cost over $(s_2,u)$ is $M\gg 1$. 
Then, the following holds:

\begin{thm} \label{subopt}Let $X(t)\in\{0,1\}^2$ be the allocation of the network in Figure~\ref{example} at time $t$, under path replication with an LRU, LFU, or FIFO policy. Then, for $\alpha=\frac{1}{\sqrt{M}}$,
$$\lim_{t\to\infty } \expect[F(X(t))]/\max_{X\in \feasibledomain_1}F(X)  = O({1}/{\sqrt{M}}) .$$
\end{thm}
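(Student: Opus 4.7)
The plan is to directly compute both the optimum and the limiting expected caching gain for each eviction policy on the three-node network of Figure~\ref{example}, exploiting its small scale. The argument has four ingredients: (i) enumerate the two feasible allocations and compute $\max F$; (ii) characterize the stationary cache distribution under LRU/FIFO via a two-state Markov chain; (iii) handle LFU via a strong-law-of-large-numbers argument; and (iv) substitute $\alpha = 1/\sqrt{M}$ and take the ratio.

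Since $c_v = 1$ and $u$ stores no designated content, a feasible allocation is specified by a single bit: either $x_{v1} = 1$ or $x_{v2} = 1$. Plugging both choices into \eqref{gain} with the weights of Figure~\ref{example}, one obtains $F$ values of order $1-\alpha$ and $\alpha M$ respectively (the contribution of the $(v,u)$ edge appears in $C_0$ and in every $C_{(i,p)}$, so it cancels in the caching gain). At $\alpha = 1/\sqrt{M}$ these equal $1 - 1/\sqrt{M}$ and $\sqrt{M}$, so $\max_{X \in \feasibledomain_1} F(X) = \Theta(\sqrt{M})$, attained by $x_{v2} = 1$.

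For LRU and FIFO, which coincide when $c_v = 1$ (on a miss there is only one resident to evict), the cache state $X(t)$ evolves as a continuous-time Markov chain on two states, with transition rate $\lambda_{(2,p_2)} = \alpha$ from ``item $1$ cached'' into ``item $2$ cached'' and rate $\lambda_{(1,p_1)} = 1-\alpha$ in the reverse direction. The unique stationary distribution is $\pi_1 = 1-\alpha$, $\pi_2 = \alpha$; positive recurrence on a finite state space, together with boundedness of $F$, gives that $\expect[F(X(t))]$ converges to $\pi_1(1-\alpha) + \pi_2 \alpha M = (1-\alpha)^2 + \alpha^2 M$, which equals $\Theta(1)$ at $\alpha = 1/\sqrt{M}$, yielding the ratio $O(1/\sqrt{M})$. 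For LFU, let $N_i(t)$ count the Poisson requests for item $i$; by the strong law, $N_i(t)/t \to \lambda_{(i,p_i)}$ almost surely, and since $\lambda_{(1,p_1)} > \lambda_{(2,p_2)}$ there exists an almost-surely finite $T$ after which $N_1(t) > N_2(t)$ for all $t \geq T$. Past $T$, an LFU cache of capacity $1$ retains item $1$ permanently, so $\expect[F(X(t))] \to 1-\alpha = \Theta(1)$ by bounded convergence, again giving an $O(1/\sqrt{M})$ ratio.

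The main subtlety lies in the LFU step: one must show that the (unspecified) tie-breaking rule used during the early, sparse portion of the sample path does not affect the limit, which follows because it alters the state only over an almost-surely finite time window and $F$ is bounded. The LRU/FIFO step is standard two-state chain analysis, and the optimum follows by enumeration. The choice $\alpha = 1/\sqrt{M}$ is simply the balance that makes the gap largest: it pushes the optimum to $\Theta(\sqrt{M})$ (since the rare-but-expensive item $2$ should be cached), while the request-rate-proportional LRU stationary distribution keeps the cached item overwhelmingly equal to the cheap item $1$, wasting the cache slot.
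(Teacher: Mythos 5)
Your proof is correct and, for LRU and FIFO, matches the paper's argument almost verbatim: the cache is a two-state CTMC with transition rates $\alpha$ and $1-\alpha$, stationary distribution $\pi_1 = 1-\alpha$, $\pi_2 = \alpha$, and so the limiting expected gain is $(1-\alpha)^2 + \alpha^2 M = \Theta(1)$ at $\alpha = 1/\sqrt{M}$, against an optimum of $\alpha M = \sqrt{M}$. The clean closed form $F(X) = (1-\alpha)x_{v1} + \alpha M x_{v2}$, obtained by noting that the cost of the always-traversed edge $(v,u)$ cancels, is a nice simplification that the paper does not state explicitly (the paper instead bounds $F(X^*)$ from below by caching item $2$ at $v$).

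Where you genuinely diverge is in the LFU case. The paper dispatches LFU in one sentence by asserting that, since $c_v=1$, LFU, LRU, and FIFO ``coincide and yield exactly the same eviction decision''---i.e., the incoming item always replaces the resident. Under that reading LFU has the same two-state stationary distribution and the same $\Theta(1)$ gain. You instead model LFU as a frequency-comparison rule that refuses to overwrite a more-frequent resident: by the SLLN on the Poisson counts $N_1,N_2$ and the positive drift of $N_1 - N_2$, past an a.s. finite time the cache is absorbed in the item-$1$ state, giving limiting gain $1-\alpha = \Theta(1)$, and you are careful to note that tie-breaking during the transient is washed out by boundedness of $F$. Your interpretation is arguably the more standard semantics of LFU (the paper's holds only if eviction is decided before the incoming item's count is consulted). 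Interestingly, the two readings give different limiting gains ($1-\alpha$ versus roughly $2$ at $\alpha=1/\sqrt{M}$)---under your LFU the cache is \emph{even worse}, never benefiting from caching the expensive item $2$ at all---but both are $\Theta(1)$, so the theorem's $O(1/\sqrt{M})$ conclusion holds either way. This is a legitimate alternative route rather than a gap, and it makes the LFU case airtight regardless of which implementation convention one assumes.
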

\begin{figure}[!t]
\includegraphics[width=\columnwidth]{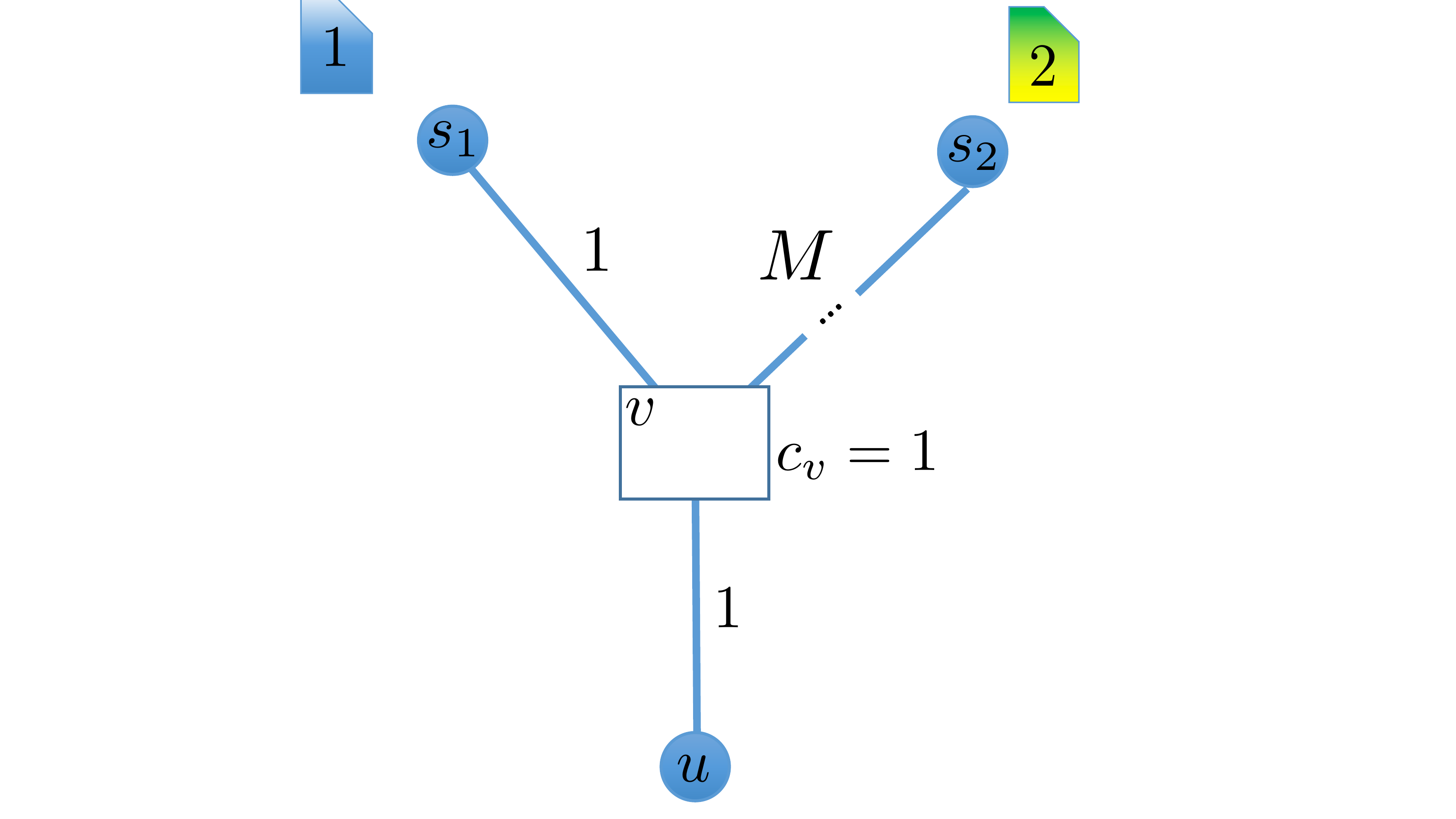}
\vspace*{-0.5cm}
\caption{A simple caching network, with $\catalog=\{1,2\}$, $S_1=\{s_1\}$, $S_2=\{s_2\}$. The cache at $v$ has capacity $c_v=1$, and the cost of the edge between $v$ and $s_2$ is $M\gg1$. Node $u$ requests item 1 with rate $1-\alpha$, and item 2 with rate $\alpha$. For $\alpha=\frac{1}{\sqrt{M}}$, path replication with LRU, LFU, or FIFO leads to an arbitrarily suboptimal caching allocation, in steady state.}\label{example}
\end{figure}
\begin{proof}
The worst case cost, when cache $v$ is empty, is:
$C_0 = \alpha \times(M+1) + (1-\alpha)\times 2 = \alpha M +2 - \alpha.  $
Suppose that $u$ permanently caches item 2. This results in an expected routing cost of $\alpha \times 1 + (1-\alpha) \times 2 = 2-\alpha.$ Hence, an optimal allocation $X^*$ necessarily has a caching gain $F(X^*) >  \alpha M +2-\alpha -(2-\alpha) = \alpha M.$
Consider now the path replication algorithm, in which either item is cached at $v$ whenever it is back-propagated over the reverse path. As $c_v=1$, the LRU, FIFO and LFU policies coincide, and yield exactly the same eviction decision. Moreover, as request arrivals are independent Poisson, the steady state probabilities that $v$ stores item 1 or 2 are  $1-\alpha$ and $\alpha$, respectively. Hence, the expected routing cost in steady state is
$\alpha^2 +\alpha(1-\alpha)\times (M+1)  + (1-\alpha)\alpha\times 2 + (1-\alpha)^2 =  1+\alpha M-\alpha^2M +\alpha -\alpha^2),$
leading to a caching gain of 
$\alpha M +2 -\alpha - (1+\alpha M - \alpha^2 M +\alpha-\alpha^2) = \alpha^2 M +1- 2\alpha +\alpha^2.$
Hence, the ratio of the expected caching gain under path replication with LRU, LFU, or FIFO evictions to $F(X^*)$ is at most
$\alpha+ \frac{(1- \alpha)^2}{\alpha M},$
and the theorem follows for $\alpha=1/\sqrt{M}$.
\end{proof}
 Taking $M$ to be arbitrarily large therefore makes this ratio arbitrarily small. Clearly, this argument  applies to any eviction strategy that is \emph{myopic}, \emph{i.e.}, is insensitive to upstream costs. Accounting for the cost of an item's retrieval when caching seems necessary to provide any optimality guarantee; this is the case for the algorithms we propose below.

\subsection{Projected Gradient Ascent}\label{sec:pga}
Given the negative result of Theorem~\ref{subopt}, we now describe our distributed, adaptive algorithm for solving \CG. Intuitively, the algorithm performs a \emph{projected gradient ascent}  over  function $L$, effectively solving the convex problem \eqref{convex} in a distributed, adaptive fashion.  The concavity of $L$ ensures convergence (in contrast to minimizing $F$ directly), while Theorem~\ref{approximation} ensures that the caching gain attained in steady state is within an $1-\tfrac{1}{e}\approx 0.62$ factor from the optimal. We describe the algorithm in detail below. 

We deal with the following two challenges.
First, in each step of  gradient ascent, a node must estimate the contribution of its own caching allocation to the gradient of the (global) function $L$. The estimation should rely only on local information; additional care needs to be taken as $L$ is not differentiable in the entire domain $\feasibledomain_2$, so a \emph{subgradient} needs to be estimated instead.
Second, the final value of the convex relaxation \eqref{convex}, as well as intermediate solutions during gradient ascent, produce fractional values $Y\in\feasibledomain_2$.  To determine what to place in each cache, discrete allocations $X\in\feasibledomain_1$ need to be determined from $Y$. Our algorithm \emph{cannot} rely on pipage rounding to construct such cache allocations: each node must determine its cache contents in a distributed way, without explicitly computing $F$. 

We address both challenges, by proving  that (a) a \emph{feasible} randomized rounding of any $Y\in \feasibledomain_2$, and (b) the subgradient of $L$ w.r.t.~$Y$ can both be computed in a distributed fashion, using only information locally available at each node.

\begin{algorithm}[t]
\begin{small}
  \caption{\textsc{Projected Gradient Ascent}}\label{alg:ascent}
    \begin{algorithmic}[1]
   \STATE Execute the following at each $v\in V$:
   \STATE Pick arbitrary state $y_{v}^{(0)}\in \feasibledomain_2^v$.
   \FOR{ each period $k\geq 1$ and each $v\in V$} 
    \STATE Compute the sliding average $\bar{y}_v^{(k)}$ through \eqref{slide}.
    \STATE Sample a $x^{(k)}_v\in \feasibledomain_1^v$ from a $\mu_v$ that satisfies \eqref{sample}.
    \STATE Place items $x^{(k)}_v$ in cache. \medskip
    \STATE Collect  measurements \medskip
    \STATE At the end of the period, compute estimate $z_{v}$ of $\partial_{y_v} L(Y^{(k)})$ through \eqref{estimation}.
    \STATE Compute new state  $y_{v}^{(k+1)}$ through \eqref{adapt}.
  \ENDFOR
  \end{algorithmic}
\end{small}
\end{algorithm}

\subsubsection{Algorithm Overview}
 We begin by giving an overview of our distributed, adaptive algorithm. We partition time into periods of equal length $T>0$, during which each node  $v\in V$ collects measurements from messages routed through it.  Each node keeps track of \emph{its own marginals}  $y_v\in[0,1]^{|\catalog|}$: intuitively, as in \eqref{marginals}, each $y_{vi}$ captures the probability that node $v\in V$ stores item $i\in\catalog$.  We refer to $y_v$ as the \emph{state} at node $v$; these values, as well as the cache contents of a node, remain constant  during a measurement period. 
When the period ends, each node (a) adapts its state vector $y_v$, and (b) reshuffles the contents of its cache, in a manner we describe below. 

\sloppy
In short, at any point in time, the  (global) allocation $X\in\feasibledomain_1$ is sampled from a joint distribution $\mu$ that has a \emph{product form}; for every $v$, there exist  appropriate probability distributions $\mu_v$, $v\in V$, such that:
\begin{align}\mu(X) = \textstyle\prod_{v\in V}\mu_v(x_{v1},\ldots, x_{v|\catalog|}).\label{productform}\end{align} 
Moreover, each marginal probability $\prob_{\mu}[x_{vi}=1]$ (i.e., the probability that node $v$ stores $i$) is determined as a ``smoothened'' version of the current state variable $y_{vi}$.

\fussy
\noindent\textbf{State Adaptation.} A node  $v\in V$ uses local measurements collected from messages it receives during a period to produce a random vector $z_{v}\in \reals_+^{|\catalog|}$ that is an unbiased estimator of a subgradient of $L$ w.r.t.~to $y_v$.  That is, if $Y^{(k)}\in\reals^{|V|\times |\catalog|}$ is the (global) matrix of marginals at the $k$-th measurement period, $z_{vi}=z_{vi}(Y^{(k)})$ is a random variable satisfying:
 \begin{align}\label{estimateprop}
\expect\left[z_{v}(Y^{(k)})\right] \in \partial_{y_{v}}L(Y^{(k)})\end{align}
where $\partial_{y_{v}} L(Y)$ is the set of subgradients of $L$ w.r.t $y_{v}$.  We specify how to produce such estimates in a distributed fashion  below, in Section~\ref{sec:distributedsub}. 
 Having these estimates, each node adapts its state as follows: at the conclusion of the $k$-th period, the new state is computed as
 \begin{align}y_v^{(k+1)} \leftarrow \mathcal{P}_{\feasibledomain_2^v} \left( y_v^{(k)} + \gamma_k\cdot z_v(Y^{(k)}) \right),\label{adapt}\end{align}
where $\gamma_k>0$ is a gain factor and $\mathcal{P}_{\feasibledomain^v_2}$ is the projection to $v$'s set of relaxed constraints: 
$$\feasibledomain_2^v = \{y_v\in [0,1]^{|\catalog|} : \sum_{i\in \catalog}y_{vi}=c_v, y_{vi}=1, \text{for }i\text{ s.t~}v\in \source_i\}.$$

\noindent\textbf{State Smoothening.}
Upon performing the state adaptation \eqref{adapt}, each node $v\in V$ computes the following
``sliding average'' of its current and past states:
 \begin{align}\label{slide}\bar{y}_v^{(k)} = \textstyle \sum_{\ell = \lfloor\frac{k}{2} \rfloor}^{k} \gamma_\ell y_v^{(\ell)} /\left[\sum_{\ell=\lfloor \frac{k}{2}\rfloor}^{k}\gamma_{\ell}\right]   .\end{align}
 This ``state smoothening''  is necessary precisely because of the non-differentiability of $L$ \cite{nemirovski2005efficient}. Note that  $\bar{y}_v^{(k)} \in \feasibledomain_2^v$, as a convex combination of points in $\feasibledomain_2^v$.

\noindent\textbf{Cache reshuffling.} Finally, given  $\bar{y}_v$, each node $v\in V$ reshuffles its contents, placing items in its cache \emph{independently of all other nodes}: that is, node $v$ selects a random allocation $x_v^{(k)}\in \{0,1\}^{|\catalog|}$ sampled independently of any other node in $V$, so that the joint distribution satisfies \eqref{productform}. 

In particular, $x_v^{(k)}$ is sampled from a distribution $\mu_v$ that has the following two properties:
\begin{packedenumerate}
\item $\mu_v$ is a distribution over \emph{feasible} allocations, satisfying $v$'s capacity and source constraints, i.e., $\mu_v$'s support is:
$$\!\!\!\!\!\!\feasibledomain_1^v = \{x_v\!\in\! \{0,1\}^{|\catalog|} : \sum_{j\in \catalog}\!x_{vj}\!=\!c_v, x_{vi}\!=\!1, \text{for }i\text{ s.t~}v\in \source_i\}.$$
\item $\mu_v$ is \emph{consistent} with the marginals $\bar{y}_v^{(k)}\in\feasibledomain^v_2 $, i.e., 
\begin{align}\expect_{\mu_v}[x_{vi}^{(k)}]=\bar{y}_{vi}^{(k)},\text{ for }i\in\catalog.\label{sample}\end{align}
\end{packedenumerate}
It is not obvious that a $\mu_v$ that satisfies these properties exists. We show below, in Section~\ref{sec:derandimize}, that such a $\mu_v$  exists,  it has $O(|\catalog|)$  support, and can be computed
 in $O(c_v|\catalog|\log|\catalog|)$ time. As a result, having $\bar{y}_v$, each node $v$ can sample a feasible allocation $x_v$ from   $\mu_v$ in $O(c_v|\catalog|\log|\catalog|)$ time. 

The complete projected gradient ascent algorithm is summarized in Algorithm~\ref{alg:ascent}.
A node may need to retrieve new items, not presently in its cache,  to implement the sampled allocation $x_v^{(k)}$.  This incurs additional routing costs but, if $T$ is large, this traffic is small compared to regular response message traffic. Before we state its convergence properties, we present the two missing pieces, namely,  the subradient estimation and  randomized rounding we use.

\subsubsection{Distributed Sub-Gradient Estimation}\label{sec:distributedsub} We now describe how to compute the estimates $z_v$ of the subgradients $\partial_{y_v}L(Y)$ in a measurement period, dropping the superscript $\cdot^{(k)}$ for brevity. These estimates are computed in a distributed fashion at each node, using only information available from messages traversing the node. This computation requires additional ``control'' messages to be exchanged between nodes, beyond the usual request and response traffic.

The estimation proceeds as follows. Given a path $p$ and a $v\in p$, denote by  
$k_p(v)$ is the position of $v$ in $p$; i.e., $k_p(v)$ equals the $k\in \{1,\ldots,|p|\}$ such that $p_k=v$. Then:
\begin{packedenumerate}
\item  Every time a node generates a new request $(i,p)$, it also generates an additional control message to be propagated over $p$, in parallel to the request. This message is propagated until a node $u\in p$ s.t.~$\textstyle\sum_{\ell=1}^{k_{p}(u)} y_{p_{\ell}i}> 1$ is found, or the end of the path is reached. This can be detected by summing the state variables $y_{vi}$ as the control message traverses nodes $v\in p$ up to $u$.
\item Upon reaching either such a  node or the end of the path, the control message is sent down in the reverse direction. Every time it traverses an edge in this reverse direction, it adds the weight of this edge into a weight counter. 
\item Every node on the reverse path ``sniffs'' the weight counter field of the control message. Hence, every node visited learns the sum of weights of all edges connecting it to visited nodes further in the path; i.e., visited node $v\in p$ learns the quantity: $$ t_{vi} = \textstyle\sum_{k'=k_v(p)}^{|p|-1} w_{p_{k'+1}p_{k'}} \id_{\sum_{\ell=1}^{k'} y_{p_{\ell}i}\leq 1}.$$ 
\item Let $\mathcal{T}_{vi}$ be the set of quantities collected in this way at node $v$ regarding item $i\in \mathcal{C}$ during a measurement period of duration $T$. At the end of the measurement period, each node $v\in V$ produces the following estimates: \begin{align}z_{vi}=\frac{1}{T} \textstyle\sum_{t\in \mathcal{T}_{vi} }t,\quad i\in\catalog.\label{estimation}\end{align} Note that, in practice, this needs to be computed only for $i\in \catalog$ for which $v$ has received a control message. 
\end{packedenumerate}
Note that the control messages in the above construction are ``free'' under our model, as they do not carry an item. Moreover, they can be piggy-backed on/merged with request/response messages, wherever the corresponding traversed sub-paths of $p$ overlap.
It is easy to show that the above estimate is an unbiased estimator of the subgradient: \begin{lemma}\label{subgradientlemma}
For  $z_{v}=[z_{vi}]_{i\in \catalog}\in \reals_+^{|\catalog|}$ the vector constructed through coordinates \eqref{estimation},
$$\expect[z_v(Y)] \in \partial_{y_v}L(Y)\text{ and }\expect[\|z_v\|_2^2]<W^2|V|^2|\catalog|(\Lambda^2+\frac{\Lambda}{T}),$$
where 
$W=\displaystyle\max_{(i,j)\in E}w_{ij}$ and  $\Lambda =\displaystyle\max_{v\in V,i\in\catalog}\!\!\sum_{(i,p)\in \requests: v\in p } \!\!\!\!\lambda_{(i,p)}.$
\end{lemma}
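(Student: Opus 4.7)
\textbf{Proof plan for Lemma~\ref{subgradientlemma}.}

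The plan is to first characterize the subdifferential of $L$ with respect to the block $y_v$, then match it to $\expect[z_v]$ using the arrival statistics, and finally bound $\expect[\|z_v\|_2^2]$ by splitting into variance plus mean squared and exploiting that each coordinate is a scaled Poisson compound sum.

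First I would write $L(Y)=\sum_{(i,p)\in\requests}\lambda_{(i,p)}\sum_{k=1}^{|p|-1}w_{p_{k+1}p_k}\,\phi_{k,i,p}(Y)$, where $\phi_{k,i,p}(Y)=\min\{1,\sum_{k'=1}^{k}y_{p_{k'}i}\}$. Because $\phi_{k,i,p}$ is the minimum of two affine functions in $Y$, its subdifferential with respect to $y_{vi}$ (for $v=p_\ell$ with $\ell\leq k$) is the singleton $\{1\}$ when $\sum_{k'=1}^{k}y_{p_{k'}i}<1$, the singleton $\{0\}$ when $\sum_{k'=1}^{k}y_{p_{k'}i}>1$, and the interval $[0,1]$ at the kink. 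Summing these (valid because $L$ is a nonnegative weighted sum of concave $\phi$'s, so subdifferentials add and we may pick the element $\id_{\{\sum_{k'=1}^k y_{p_{k'}i}\le 1\}}$ at each kink), a valid subgradient coordinate is
\[
g_{vi}=\sum_{(i,p)\in\requests:\,v\in p}\lambda_{(i,p)}\sum_{k'=k_p(v)}^{|p|-1}w_{p_{k'+1}p_{k'}}\,\id_{\{\sum_{\ell=1}^{k'}y_{p_{\ell}i}\le 1\}},
\]
where the inner sum runs only from $k_p(v)$ because only those positions $k\ge k_p(v)$ put $v$ among the first $k$ nodes of $p$.

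Next I would verify $\expect[z_v]=g_v$. During a period of length $T$, requests of type $(i,p)$ arrive as independent Poissons with rate $\lambda_{(i,p)}$, and each arrival with $v\in p$ contributes the deterministic quantity $\tau_{vi}(p):=\sum_{k'=k_p(v)}^{|p|-1} w_{p_{k'+1}p_{k'}}\id_{\{\sum_{\ell=1}^{k'}y_{p_{\ell}i}\le 1\}}$ to $\mathcal{T}_{vi}$ (this is exactly the weight counter the control message accumulates, since the control message stops precisely at the first node where the partial sum exceeds $1$). Writing $N_{(i,p)}\sim\mathrm{Poisson}(\lambda_{(i,p)}T)$ for the number of arrivals in the period,
\[
z_{vi}=\tfrac{1}{T}\sum_{(i,p):v\in p}N_{(i,p)}\tau_{vi}(p),
\]
so $\expect[z_{vi}]=\sum_{(i,p):v\in p}\lambda_{(i,p)}\tau_{vi}(p)=g_{vi}$, establishing the first claim.

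For the second moment, I would use independence of the Poisson counts and the identity $\mathrm{Var}(N_{(i,p)})=\lambda_{(i,p)}T$, so that
\[
\expect[z_{vi}^2]=\mathrm{Var}(z_{vi})+g_{vi}^2=\tfrac{1}{T}\sum_{(i,p):v\in p}\lambda_{(i,p)}\tau_{vi}(p)^2+g_{vi}^2.
\]
The deterministic bound $0\le\tau_{vi}(p)\le W(|p|-1)\le W|V|$ (paths are simple, so $|p|\le|V|$) combined with $\sum_{(i,p):v\in p}\lambda_{(i,p)}\le\Lambda$ yields $g_{vi}\le W|V|\Lambda$ and $\mathrm{Var}(z_{vi})\le W^2|V|^2\Lambda/T$. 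Therefore $\expect[z_{vi}^2]\le W^2|V|^2(\Lambda^2+\Lambda/T)$, and summing over the $|\catalog|$ coordinates delivers the desired bound on $\expect[\|z_v\|_2^2]$.

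The main obstacle I anticipate is the subgradient identification at kink points: because $L$ is non-differentiable exactly where the partial-sum constraints bind, I have to be careful that the control-message protocol's ``stop when the running sum exceeds $1$'' rule corresponds to the convention $\id_{\{\sum\le 1\}}$ in the subgradient formula, and that this particular selection lies in $\partial_{y_v}L(Y)$. Once this correspondence is pinned down, the rest is a routine Poisson-moment computation.
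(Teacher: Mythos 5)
Your proposal is correct and follows essentially the same approach as the paper: you characterize $\partial_{y_v}L(Y)$ via the left/right partial derivatives of the piecewise-linear $\min$ terms, observe that the control-message protocol makes $\expect[z_{vi}]$ equal to the ``$\leq 1$'' selection of the subgradient (which the paper calls $\overline{\partial_{y_{vi}}L}(Y)$), and bound the second moment via Poisson moments with the uniform bound $t\leq W|V|$. The only cosmetic difference is in the second-moment step: the paper bounds $\expect[(\sum_{t\in\mathcal T_{vi}}t)^2]\leq W^2|V|^2\,\expect[|\mathcal T_{vi}|^2]$ and then uses that $|\mathcal T_{vi}|$ is a single Poisson of rate $\sum_{(i,p):v\in p}\lambda_{(i,p)}T$, whereas you decompose $z_{vi}=\frac1T\sum_{(i,p):v\in p}N_{(i,p)}\tau_{vi}(p)$ into independent Poisson counts and sum their variances; both routes give the identical final bound $W^2|V|^2|\catalog|(\Lambda^2+\Lambda/T)$.
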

\begin{proof}
 First, let:
\begin{subequations}\label{partiallimits}
\begin{align}
\overline{\partial_{y_{vi}}L}(Y) &= \!\!\! \!\!\!\!\sum_{(i,p)\in \requests:v\in p}\!\!\!\!\! \!\!\lambda_{(i,p)}\!\!\!\!\sum_{k'=k_p(v)}^{|p|-1} \!\!\!w_{p_{k'+1}p_{k'}} \id_{\sum_{\ell=1}^{k'} y_{p_{\ell}i}\leq 1}\\
\underline{\partial_{y_{vi}}L}(Y) & = \!\!\!\!\!\!\sum_{(i,p)\in \requests:v\in p}\!\!\!\!\!\!\! \lambda_{(i,p)}\!\!\!\!\sum_{k'=k_p(v)}^{|p|-1} \!\!\!w_{p_{k'+1}p_{k'}} \id_{\sum_{\ell=1}^{k'} y_{p_{\ell}i}< 1 }
\label{lower}\end{align}
\end{subequations}
where 
$k_p(v)$ is the position of $v$ in $p$, i.e., it is the $k\in \{1,\ldots,|p|\}$ such that $p_k=v$.

A vector $z\in \reals^{|\catalog|}$  belongs to the subgradient set $\partial_{y_v}L(Y)$ if and only if
$z_i \in [ \underline{\partial_{y_{vi}}L}(Y) ,\overline{\partial_{y_{vi}}L}(Y) ]. $
If $L$ is differentiable at $Y$ w.r.t $y_{vi}$, the two limits coincide and are equal to $\tfrac{\partial L}{\partial y_{vi}}.$
It immediately follows from the fact that requests are Poisson that $\expect[z_{vi}(Y)] = \overline{\partial_{y_{vi}}L}(Y)$, so indeed $\expect[z_{v}(Y)] \in \partial_{y_v}L(Y)$.
To prove the bound on the second moment, note that
$\expect[z_{vi}^2]=\frac{1}{T^2}\expect[(\sum_{t\in \mathcal{T}_{vi}}t)^2] 
\leq \frac{W^2|V|^2}{T^2} \expect\Big[|\mathcal{T}_{vi}|^2\Big]$
as $t\leq W|V|$. On the other hand, $|\mathcal{T}_{vi}|$ is Poisson distributed with expectation $\sum_{(i,p)\in \requests:v\in p}\lambda_{(i,p)} T$, and the lemma follows. \hspace*{\stretch{1}}\qed
\end{proof}

\subsubsection{Distributed Randomized Rounding}\label{sec:derandimize}
We now turn our attention to the distributed, randomized rounding scheme executed each node $v\in V$. To produce a $\mu_v$ over $\feasibledomain_1^v$ that satisfies \eqref{sample}, note that it suffices to consider $\mu_v$ such that $\expect_{\mu_v}[x_v]=\bar{y}_{v}$, defined  over the set: \begin{align}
\bar{\feasibledomain}_1^v =\{x_v\in  \{0,1\}^{|\catalog|}:\textstyle\sum_{i\in \catalog} x_{vi} = c_v\}.
\label{localconstr}
\end{align}
 That is, subject to attaining correct marginals, one can ignore the source constraints:
to see this, note that if $v\in S_i$,  $\bar{y}_{vi}=1$ for any $Y\in \feasibledomain_2$.  Hence, \eqref{sample}  ensures that $v$ stores item $i$ w.p.~1. We thus focus on constructing a distribution $\mu_v$ over $\bar{\feasibledomain}_1^v$, under a given set of marginals $\bar{y}_v$.
Note that a ``na\"ive'' construction in which  $x_{vi}$, $v\in V$, $i\in\catalog$,  are independent Bernoulli variables with parameters $\bar{y}_{vi}$ indeed satisfies \eqref{sample}, but \emph{does not} yield vectors $x_v\in\bar{\feasibledomain}_1^v$: indeed, such vectors only satisfy the capacity constraint in expectation, and may contain fewer or more items than $c_v$.

\begin{figure}[t]
\hspace*{\stretch{1}}\includegraphics[width=0.9\columnwidth]{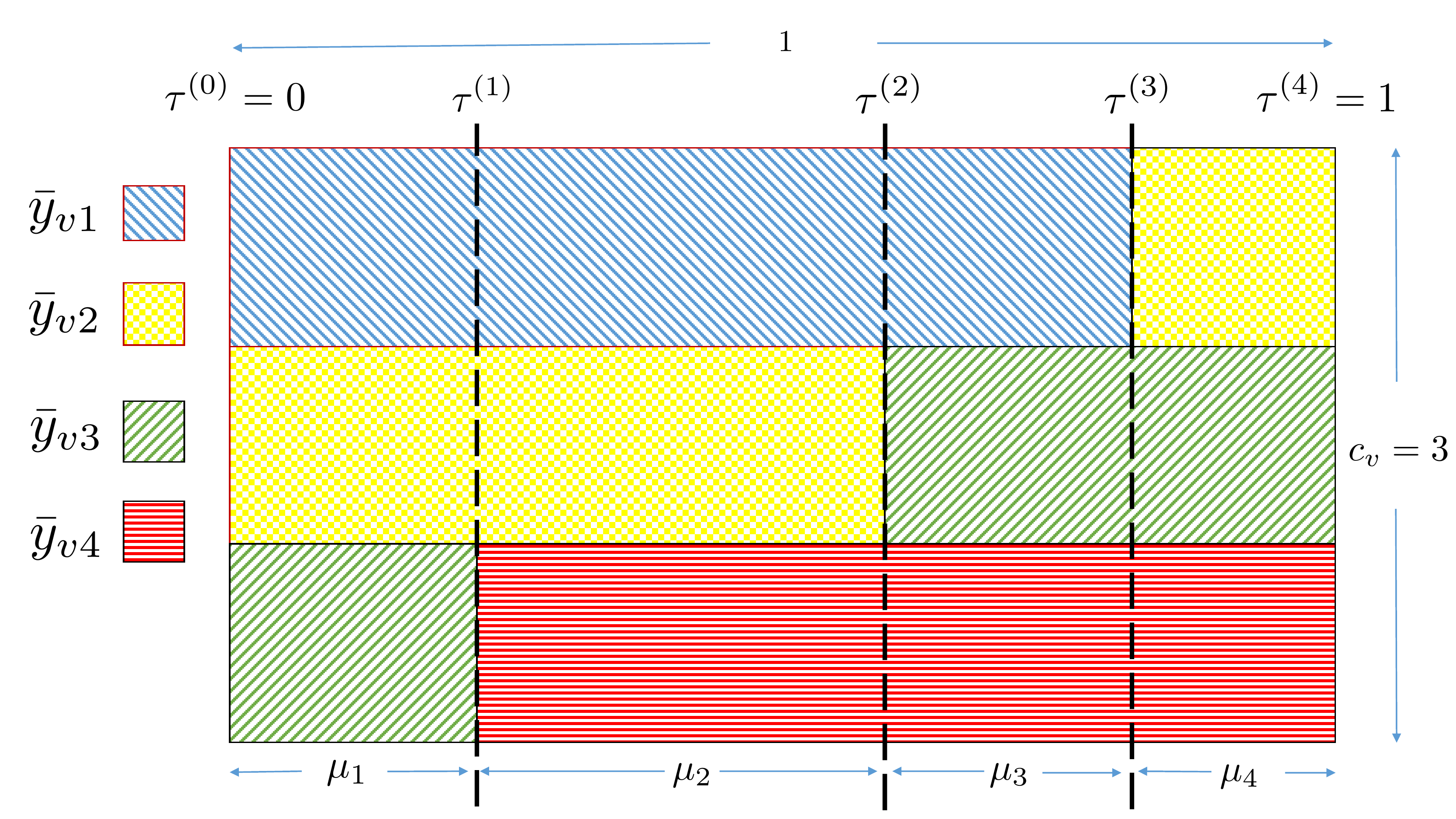}\hspace*{\stretch{1}}
\caption{An allocation that satisfies $\expect_{\mu}[x_{vi}]=\bar{y}_{vi}$, when $\sum_{i\in \catalog}\bar{y}_{vi}=c_v$. After placing the 4 rectangles in a $1\times 3$ grid, assigning probabilities $\mu_1$, $\mu_2$, $\mu_3$, $\mu_4$ to each of the tuples $\{1,2,3\}$, $\{1,2,4\}$, $\{1,3,4\}$, $\{2,3,4\}$, respectively, yields the desired marginals. }\label{fig:allocation}
\end{figure}

\sloppy
Before we formally present our algorithm we first give some intuition behind it, also illustrated in Figure~\ref{fig:allocation}. Let $c_v=3$, $\catalog=\{1,2,3,4\}$, and consider a $\bar{y}_v\in \feasibledomain_2^v.$ To construct an allocation with the desired marginal distribution, consider  a rectangle box  of area $c_v\times 1$. For each $i\in \catalog$, place a rectangle of length $\bar{y}_{vi}$ and height 1 inside the box, starting from the top left corner. If a rectangle does not fit in a row, cut it, and place the remainder in the row immediately below, starting again from the left. As $\sum_{i=1}^{|\catalog|}\bar{y}_{vi}=c_v$, this space-filling method completely fills (i.e., tessellates) the $c_v\times 1$ box.

\fussy
Consider now, for each row, all fractional values $\tau\in[0,1]$ at which two horizontal rectangles meet. We call these values the \emph{cutting points}. Notice that there can be at most $|\catalog|-1$ such points. Then, partition the $c_v\times 1$ box vertically, splitting it at these cutting points. This results in at most $|\catalog|$ vertical partitions (also rectangles), with $c_v$ rows each.
Note that each one of these vertical partitions correspond to tuples comprising $c_v$ distinct items of $\catalog$. Each row of a vertical partition must contain some portion of a horizontal rectangle, as the latter tessellate the entire box. Moreover,  no vertical partition can contain the same horizontal rectangle in two rows or more (i.e., a horizontal rectangle cannot ``overlap'' with itself), because $\bar{y}_{vi}\leq 1$, for all $\bar{y}_{vi}\in\catalog$. 
The desired probability distribution $\mu_v$ can then be constructed by setting  (a) its support to be the $c_v$-tuples defined by each vertical partition, and (b) the probability of each $c$-tuple  to be the length of the partition (i.e., the difference of the two consecutive $\tau$ cutting points that define it). The marginal probability of an item will then be exactly the length of its horizontal rectangle, i.e., $y_{vi}$, as desired.
 
The above process is described formally in Algorithm~\ref{alg:allocation}. The following lemma establishes its correctness:
\begin{lemma} \label{placementcorrectness}
 Alg.~\ref{alg:allocation} produces a $\mu_v$ over $\bar{\feasibledomain}_1^v$ s.t.~\eqref{sample} holds.
\end{lemma}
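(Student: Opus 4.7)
The plan is to formalize the geometric tessellation underlying Algorithm~\ref{alg:allocation} and then verify three things: that every tuple in the support of $\mu_v$ actually lies in $\bar{\feasibledomain}_1^v$, that the tuple weights sum to $1$, and that the marginal of each item under $\mu_v$ equals $\bar{y}_{vi}$. I would begin by fixing an arbitrary ordering of $\catalog$ and setting $S_0=0$, $S_i=\sum_{j\leq i}\bar{y}_{vj}$. The space-filling layout then admits a clean analytic description: item $i$ occupies the set of points $(\tau,j)\in[0,1]\times\{0,1,\dots,c_v-1\}$ satisfying $j+\tau\in[S_{i-1},S_i]$. Since $S_{|\catalog|}=c_v$, the horizontal rectangles exactly tessellate the $c_v\times 1$ box.

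The key structural step, and the main obstacle, is to show that at every column $\tau\in[0,1]$ the $c_v$ items read off across the rows are all distinct, i.e., form an element of $\bar{\feasibledomain}_1^v$. For this I would argue: because $\bar{y}_{vi}=S_i-S_{i-1}\leq 1$, the set $\{j\in\{0,\dots,c_v-1\}: j+\tau\in[S_{i-1},S_i]\}$ consists of integers lying in an interval of length at most one, so it contains at most one integer. Hence each item $i$ occupies at most one row at column $\tau$. Combined with tessellation---each row contains exactly one item at $\tau$, off a measure-zero set of cutting points---this forces the tuple $T(\tau)$ read across rows to consist of $c_v$ distinct items. This is precisely where the hypothesis $\bar{y}_{vi}\leq 1$ is used; without it, a single rectangle could wrap and appear twice in the same column.

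Next I would bookkeep the cutting points. As $\tau$ sweeps $[0,1]$, $T(\tau)$ changes only at the finitely many $\tau$ of the form $S_i-j$ lying in $(0,1)$, of which there are at most $|\catalog|-1$; these partition $[0,1]$ into at most $|\catalog|$ strips $(\tau_{\ell-1},\tau_{\ell})$ on each of which $T(\tau)\equiv T_\ell\in\bar{\feasibledomain}_1^v$. Defining $\mu_v(T)=\sum_{\ell:T_\ell=T}(\tau_\ell-\tau_{\ell-1})$ yields a probability distribution supported on $\bar{\feasibledomain}_1^v$, since the strip widths sum to $1$.

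Finally, I would verify the marginals by a direct computation. For each $i\in\catalog$,
\[
\expect_{\mu_v}[x_{vi}]=\int_0^1\id_{\{i\in T(\tau)\}}\,d\tau=\int_0^1\sum_{j=0}^{c_v-1}\id_{\{j+\tau\in[S_{i-1},S_i]\}}\,d\tau=\int_{S_{i-1}}^{S_i}\!du=\bar{y}_{vi},
\]
using Fubini and the change of variables $u=j+\tau$ within each row's contribution to collapse the sum into a single integral over $[S_{i-1},S_i]$. This closes the argument and also transparently shows why the whole geometric construction is dictated by the marginal constraint.
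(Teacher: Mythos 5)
Your proof is correct and follows essentially the same geometric tessellation argument the paper uses in Appendix~B: you verify the same three properties (support contained in $\bar{\feasibledomain}_1^v$, total mass $1$, correct marginals), and the key observation---that $\bar{y}_{vi}\leq 1$ prevents a rectangle from appearing twice in the same column---is exactly the content of the paper's Lemma~\ref{span}. Your marginal verification via the Fubini computation $\int_0^1\id_{\{i\in T(\tau)\}}\,d\tau=\bar{y}_{vi}$ is a cleaner rendering of the paper's argument, which instead iteratively extracts partition intervals contained in $[s_i,t_i]$ and sums their Borel measures; both reduce to the observation that the column strips assigning item $i$ tile $[S_{i-1},S_i]$ exactly.
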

\techrep{A proof can be found in Appendix~\ref{appendix:corr}.}{A proof can be found in \cite{arxiv}.} Note that, contrary to the ``na\"ive'' Bernoulli solution, the resulting  variables $x_{vi},x_{vj}$, where $i\neq j$,  \emph{may not be independent} (even though allocations are independent across caches).
 The algorithm's complexity is  $O(c_v|\catalog|\log|\catalog|)$, as the sort in line \ref{sort} can be implemented in $O(c_v|\catalog|\log|\catalog|)$ time, while  a match in \ref{find} can be found in $O(\log|\catalog|)$ time if intervals are stored in a binary search tree. Moreover, the support of $\mu_v$ has size at most $|\catalog|$, so representing this distribution requires $O(c_v|\catalog|)$ space.

\begin{algorithm}[t]
\begin{small}
  \caption{\textsc{Placement Algorithm}}\label{alg:allocation}
    \begin{algorithmic}[1]
  \STATE \textbf{Input:} capacity $c_v$, marginals $\bar{y}_v\in \reals^{|\catalog|}$ s.t. $\bar{y}_v\geq\mathbf{0}$, $\sum_{i=1}^{|\catalog|}\bar{y}_{vi}=c_v$
  \STATE \textbf{Output:} prob.~distr.~$\mu_v$ over $\{x\in\{0,1\}^{|\catalog|}:\sum_{i=1}^{|\catalog|}=c_v \}$ s.t.~$\expect_{\mu}[x_i] = \bar{y}_{vi},$ for all $i\in \catalog$. 
   \newcommand{\sofar}{\ensuremath{\mathtt{sum}}}
    \STATE $\sofar\leftarrow 0$
    \FORALL{$i\in \catalog$}
	\STATE	$s_i \leftarrow \sofar$ ;  $t_i \leftarrow \sofar +\bar{y}_{vi}$; $\tau_i \leftarrow t_i-\lfloor t_i\rfloor$ \label{quantities}
	\STATE $\sofar\leftarrow t_i$
    \ENDFOR
    \STATE Sort all $\tau_i$ in increasing order, remove duplicates, and append $1$ to the end of the sequence.\label{sort}
    \STATE Let $0=\tau^{(0)}<\tau^{(1)}<\ldots<\tau^{(K)}=1$ be the resulting sequence.\label{sorted}
    \FORALL{ $k \in \{0,\ldots,K-1\} $}
    \STATE Create new vector $x\in \{0,1\}^{|\catalog|}$; set $x\leftarrow\mathbf{0}$.
    \FORALL{$\ell\in\{0,\ldots,c-1\}$}
	\STATE Find $i\in \catalog$ such that $(\ell+\tau^{(k)},\ell+\tau^{(k+1)})\subset [s_i,t_i]$. \label{find}
	\STATE Set $x_i\leftarrow 1$
    \ENDFOR
    \STATE Set $\mu_v(x) = \tau^{(k+1)}-\tau^{(k)}$
    \ENDFOR
  \RETURN $\mu_v$
  \end{algorithmic}
\end{small}
\end{algorithm}

\subsubsection{Convergence}  We now establish the convergence  of the smoothened marginals of projected gradient ascent to a global minimizer of $L$:
\begin{thm} \label{convergencethm}Let $\bar{Y}^{(k)}\in \feasibledomain_2$ be the smoothened marginals at the $k$-th period of Algorithm~\ref{alg:ascent}. Then,
$$\varepsilon_k\equiv \expect[\max_{Y\in \feasibledomain_2} L(Y)-L(\bar{Y}^{(k)})] \leq  \frac{D^2 + M^2 \sum_{\ell=\lfloor k/2\rfloor}^{k}\gamma_\ell^2 }{\sum_{\ell=\lfloor k/2\rfloor}^{k}\gamma_\ell} ,$$
where $D = \sqrt{2|V|\max_{v\in V}c_v}$, $M=W|V| \Lambda\sqrt{|V||\catalog|(1+\frac{1}{\Lambda T})}$. In particular, for $\gamma_k = \tfrac{D}{M\sqrt{k}}$, we have $\varepsilon_{k}\leq O(1) \frac{MD}{\sqrt{k}},$ where $O(1)$ is an absolute constant.
\end{thm}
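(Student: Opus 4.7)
\medskip
\noindent\textbf{Proof plan for Theorem~\ref{convergencethm}.}
The plan is to treat this as a textbook projected stochastic subgradient ascent on a concave non-smooth function with Polyak--Ruppert (sliding-window) averaging, the workhorse result from Nemirovski et al.\ for which Lemma~\ref{subgradientlemma} supplies exactly the two ingredients needed: an unbiased subgradient estimator with bounded second moment. The proof boils down to (i) bounding the diameter of $\feasibledomain_2$ by $D$, (ii) bounding $\expect[\|z^{(k)}\|_2^2]$ by $M^2$, (iii) running the standard one-step distance-to-optimum recursion, and (iv) converting an averaged-iterate bound on $L$ via Jensen's inequality applied to the concave $L$.

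First, I would verify the two constants. For the diameter, note that every $Y\in\feasibledomain_2$ has entries in $[0,1]$ with $\sum_i y_{vi}=c_v$, so $\|Y\|_2^2=\sum_{v,i} y_{vi}^2\le \sum_{v,i} y_{vi}=\sum_v c_v\le |V|\max_v c_v$; hence for any $Y,Y'\in\feasibledomain_2$ the trick $(y_{vi}-y'_{vi})^2\le y_{vi}+y'_{vi}$ yields $\|Y-Y'\|_2^2\le 2|V|\max_v c_v=D^2$. For the noise bound, stacking the per-node estimators $z_v$ from Lemma~\ref{subgradientlemma} into the full vector $z^{(k)}\in\reals^{|V|\times|\catalog|}$ gives $\expect[\|z^{(k)}\|_2^2]=\sum_v\expect[\|z_v\|_2^2]\le |V|\cdot W^2|V|^2|\catalog|(\Lambda^2+\Lambda/T)=M^2$; Lemma~\ref{subgradientlemma} also shows $\expect[z^{(k)}]\in\partial L(Y^{(k)})$ coordinate-wise, and since $\feasibledomain_2$ is a direct product of the per-node simplices $\feasibledomain_2^v$, the projection used in \eqref{adapt} coincides with the global projection $\mathcal{P}_{\feasibledomain_2}$.

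Second, using non-expansivity of $\mathcal{P}_{\feasibledomain_2}$, the ascent step $Y^{(\ell+1)}=\mathcal{P}_{\feasibledomain_2}(Y^{(\ell)}+\gamma_\ell z^{(\ell)})$ satisfies, for any $Y^\star\in\arg\max_{\feasibledomain_2} L$,
\begin{equation*}
\|Y^{(\ell+1)}-Y^\star\|_2^2\le \|Y^{(\ell)}-Y^\star\|_2^2+2\gamma_\ell\langle z^{(\ell)},Y^{(\ell)}-Y^\star\rangle+\gamma_\ell^2\|z^{(\ell)}\|_2^2.
\end{equation*}
Taking conditional expectation and invoking the subgradient inequality for the concave $L$ (which yields $\langle g,Y^{(\ell)}-Y^\star\rangle\le L(Y^{(\ell)})-L(Y^\star)$ for any $g\in\partial L(Y^{(\ell)})$) produces the master recursion
\begin{equation*}
2\gamma_\ell\expect[L(Y^\star)-L(Y^{(\ell)})]\le \expect[\|Y^{(\ell)}-Y^\star\|_2^2]-\expect[\|Y^{(\ell+1)}-Y^\star\|_2^2]+\gamma_\ell^2 M^2.
\end{equation*}
Telescoping $\ell=\lfloor k/2\rfloor,\dots,k$, dropping the negative terminal distance, and bounding the leading distance by $D^2$ gives $\sum_{\ell=\lfloor k/2\rfloor}^{k}\gamma_\ell\,\expect[L(Y^\star)-L(Y^{(\ell)})]\le \tfrac{1}{2}(D^2+M^2\sum_{\ell}\gamma_\ell^2)$. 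Finally, concavity of $L$ applied to the convex combination \eqref{slide} gives $L(\bar Y^{(k)})\ge(\sum_\ell\gamma_\ell)^{-1}\sum_\ell\gamma_\ell L(Y^{(\ell)})$, so dividing by $\sum_\ell\gamma_\ell$ produces exactly the claimed bound on $\varepsilon_k$ (the stated theorem absorbs the factor $1/2$ into the bound).

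For the rate under $\gamma_k=D/(M\sqrt{k})$, I would use the standard Riemann-sum estimates $\sum_{\ell=\lfloor k/2\rfloor}^{k}\ell^{-1/2}=\Theta(\sqrt{k})$ and $\sum_{\ell=\lfloor k/2\rfloor}^{k}\ell^{-1}=\Theta(1)$, so the numerator becomes $D^2+M^2\cdot(D^2/M^2)\cdot\Theta(1)=\Theta(D^2)$ and the denominator $\Theta(D\sqrt{k}/M)$, yielding $\varepsilon_k=O(MD/\sqrt{k})$. I do not anticipate a hard step; the only subtlety is the diameter computation (where the $[0,1]$-box bound is what saves a factor of $2$ over a crude $\ell_2$ estimate) and the observation that the direct-product structure of $\feasibledomain_2$ is what allows the per-node projections in \eqref{adapt} to be treated as a single joint projection in the global recursion.
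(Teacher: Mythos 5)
Your proposal is correct and follows the same route as the paper: the paper simply invokes Theorem~14.1.1 of Nemirovski~\cite{nemirovski2005efficient} and then computes the diameter $D$ and noise bound $M$ exactly as you do (your $(y_{vi}-y'_{vi})^2\le y_{vi}+y'_{vi}$ trick recovers $D^2=2|V|\max_v c_v$, and stacking the per-node bounds from Lemma~\ref{subgradientlemma} recovers $M^2$). The only difference is that you unpack the cited textbook theorem---non-expansive projection, one-step recursion, telescoping over the sliding window $\{\lfloor k/2\rfloor,\dots,k\}$, and Jensen applied to the concave $L$---whereas the paper treats it as a black box; your version is self-contained but otherwise identical in substance, and your remark that the stated bound silently absorbs a factor of $1/2$ is accurate.
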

\begin{proof}
Under dynamics \eqref{adapt} and \eqref{slide}, from Theorem 14.1.1, page 215 of Nemirofski~\cite{nemirovski2005efficient}, we have that
$$ \expect[\max_{Y\in \feasibledomain_2}L(Y) - L(\bar{Y}^{(k)})] \leq  \frac{D^2 + M^2 \sum_{\ell=\lfloor k/2\rfloor}^{k}\gamma_\ell^2 }{\sum_{\ell=\lfloor k/2\rfloor}^{k}\gamma_\ell} $$
where 
$D\equiv\max_{x,y\in\feasibledomain_2}\|x-y\|_2 = \sqrt{\max_v 2|V| c_v}$ is  the diameter of $\feasibledomain_2$, and
$$M \equiv\sup_{Y\in\feasibledomain_2} \sqrt{ \sum_{v\in V} \expect[\|z_v(Y)\|_2^2] } \leq  W|V|\sqrt{|V||\catalog|(\Lambda^2+\frac{\Lambda}{T})},$$ where the last equality follows from Lemma~\ref{subgradientlemma}.\hspace*{\stretch{1}}\qed
\end{proof}
The  $O(1) \frac{MD}{\sqrt{k}}$ upper bound presumes knowledge of $D$ and $M$ when setting the gains $\gamma_k$, $k\geq 1$. Nonetheless, even when these are not apriori known, taking 
$\gamma_{k} =1/\sqrt{k}$
suffices to ensure the  algorithm converges with rate $1/\sqrt{k}$, up to (larger) constants, that depend on $D$ and $M$. 
Moreover, the relationship between $M$ and $T$ captures the tradeoff induced by $T$: larger $T$s give more accurate estimates of the subgradients, reducing the overall number of steps till convergence, but increase the length of each individual period.
Finally,  Thms.~\ref{approximation} and~\ref{convergencethm} imply that the asymptotic expected caching gain under Algorithm~\ref{alg:ascent} is within a constant factor from the optimal: \begin{thm}\label{maincor}
Let $X^{(k)}\in \feasibledomain_1$ be the allocation at the $k$-th period of Algorithm~\ref{alg:ascent}. Then, if $\gamma_k =\Theta(1/\sqrt{k})$, 
$$\lim_{k\to \infty} \expect[F(X^{(k)})] \geq \big(1 -\frac{1}{e}\big)\max_{X\in \feasibledomain_1} F(X). $$
\end{thm}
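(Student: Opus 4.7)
The plan is to chain together three ingredients that are already in hand: (i) a product-form identity linking $\expect[F(X^{(k)})]$ to the smoothened marginals $\bar{Y}^{(k)}$; (ii) a pointwise lower bound $F(Y) \geq (1-1/e)\,L(Y)$ on $\feasibledomain_2$, which is the same inequality underlying Theorem~\ref{approximation}; and (iii) the expected convergence $\expect[L(\bar{Y}^{(k)})] \to \max_Y L(Y)$ supplied by Theorem~\ref{convergencethm}. Chaining these in order immediately delivers the stated $(1-1/e)$ guarantee.

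First I would establish the conditional identity
$$\expect[F(X^{(k)}) \mid \bar{Y}^{(k)}] = F(\bar{Y}^{(k)}).$$
Algorithm~\ref{alg:ascent} samples $X^{(k)}$ from a joint distribution $\mu = \prod_v \mu_v$ of product form across caches (Eq.~\eqref{productform}), and Lemma~\ref{placementcorrectness} certifies $\expect_{\mu_v}[x_{vi}] = \bar{y}_{vi}^{(k)}$. Repeating the derivation of \eqref{equality}, each summand $\prod_{k'=1}^{k}(1-x_{p_{k'}i})$ appearing in $F$ involves variables living on \emph{distinct} nodes of $p$---because every $p$ is well-routed and hence simple---so the product-form structure alone factorizes the expectation. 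The intra-cache correlations introduced by Algorithm~\ref{alg:allocation} never enter. Taking outer expectations then gives $\expect[F(X^{(k)})] = \expect[F(\bar{Y}^{(k)})]$.

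Next I would prove the pointwise bound $F(Y) \geq (1-1/e)\,L(Y)$ term by term, using the elementary inequality
$$1 - \prod_{k'=1}^{n}(1-z_{k'}) \;\geq\; (1-1/e)\,\min\bigl\{1,\; \textstyle\sum_{k'=1}^{n} z_{k'}\bigr\}, \qquad z_{k'} \in [0,1],$$
obtained by splitting on whether $\sum_{k'} z_{k'} \leq 1$ (using concavity of $1-e^{-t}$) or $\sum_{k'} z_{k'} > 1$ (where $\prod_{k'}(1-z_{k'}) \leq e^{-\sum z_{k'}} < 1/e$). Applied to each request $(i,p) \in \requests$ with $z_{k'} = y_{p_{k'}i}$, this yields $F(Y) \geq (1-1/e)\,L(Y)$ on all of $\feasibledomain_2$. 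Combined with the first step,
$$\expect[F(X^{(k)})] = \expect[F(\bar{Y}^{(k)})] \;\geq\; (1-1/e)\,\expect[L(\bar{Y}^{(k)})].$$

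Finally, with $\gamma_k = \Theta(1/\sqrt{k})$, Theorem~\ref{convergencethm} gives $\expect[L(\bar{Y}^{(k)})] \to \max_{Y \in \feasibledomain_2} L(Y)$. Since $F(Y) \leq L(Y)$ pointwise on $\feasibledomain_2$ and $F(Y^*) \geq F(X^*)$ by \eqref{trivial}, we have $\max_Y L(Y) \geq L(Y^*) \geq F(Y^*) \geq \max_{X \in \feasibledomain_1} F(X)$, closing the argument. The one subtle step---and the only place where the plan might be mis-executed---is the very first: recognizing that although $\mu_v$ produced by Algorithm~\ref{alg:allocation} introduces dependencies \emph{within} a cache, the product structure \emph{across} caches plus the simplicity of well-routed paths is precisely what \eqref{equality} requires. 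Everything beyond that reduces to the termwise algebraic inequality above and a black-box invocation of Theorem~\ref{convergencethm}.
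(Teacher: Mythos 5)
Your proof is correct, and while it shares the crucial first step with the paper (the conditional identity $\expect[F(X^{(k)})\mid\bar{Y}^{(k)}]=F(\bar{Y}^{(k)})$, justified via the product form \eqref{productform}, Lemma~\ref{placementcorrectness}, and the simplicity of well-routed paths), the remainder takes a genuinely different and arguably cleaner route. The paper passes $\lim_k\expect[L(\bar{Y}^{(k)})]=\max_Y L(Y)$ from Theorem~\ref{convergencethm} into a weak-convergence argument: the distribution of $\bar{Y}^{(k)}$ concentrates on the set $\Omega$ of maximizers of $L$, and then Theorem~\ref{approximation} is invoked at those maximizers, with boundedness of $F$ off $\Omega$ handling the residual mass. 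This step is slightly delicate---convergence of the expectation to the optimum strictly speaking forces concentration on $\epsilon$-near-optimal neighborhoods, not on $\Omega$ itself, so a continuity argument is implicitly needed. You sidestep this entirely by invoking the \emph{pointwise} sandwich inequality $(1-1/e)L(Y)\leq F(Y)\leq L(Y)$ on all of $\feasibledomain_2$ (which is exactly Eq.~\eqref{lem:sandwitch} from the appendix proof of Theorem~\ref{approximation}) and taking expectations on both sides: $\expect[F(\bar{Y}^{(k)})]\geq(1-1/e)\expect[L(\bar{Y}^{(k)})]\to(1-1/e)\max_Y L(Y)\geq(1-1/e)\max_X F(X)$. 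No measure-theoretic limiting argument is needed. What the paper's route buys is that it invokes Theorem~\ref{approximation} as a black box; what yours buys is a shorter chain of inequalities that avoids the concentration subtlety, at the small cost of reaching into the sandwich inequality that underlies Theorem~\ref{approximation} rather than just its statement.
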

\begin{proof} \sloppy
Given $\bar{Y}^{(k)}$,  by Lemma~\ref{placementcorrectness},  $X^{(k)}$ is sampled from a distribution $\mu$ over $\feasibledomain_1$ that has product form \eqref{productform}.  This product form implies that, conditioned on  $\bar{Y}^{(k)}$, Eq.~\eqref{equality} holds;  thus, $\expect[F(X^{(k)})\mid \bar{Y}^{(k)}] = \expect[F(\bar{Y}^{(k)})]$, so
$\lim_{k\to\infty} \expect[F(X^{(k)})] =\lim_{k\to \infty }\expect[F(\bar{Y}^{(k)})].$
From Thm.~\ref{convergencethm},
$\lim_{k\to\infty}\expect[L(\bar{Y}^{(k)})] =\max_{Y\in \feasibledomain_2} L(Y).$
This implies that, for $\nu^{(k)}$ the distribution of $\bar{Y}^{(k)}$, and $\Omega$ the set of $Y\in \feasibledomain_2$ that are maximizers of $L$, 
 $\lim_{k\to\infty} \nu^{(k)}(\feasibledomain_2\setminus \Omega )=0.$
From Theorem~\ref{approximation},   $F(Y)\geq (1-1/e)\max_{X\in  \feasibledomain_1} F(X)$ for any $Y\in \Omega$. The theorem therefore follows from the above observations, and the fact that $F$ is bounded in $\feasibledomain_2\setminus \Omega$.
\end{proof}

\fussy
We state these results under stationary demands but, in practice, we would prefer that caches adapt to demand fluctuations. To achieve this, one would fix $\gamma$ to a constant positive value, ensuring that  Algorithm~\ref{alg:ascent} tracks  demand changes. Though convergence to a minimizer is not guaranteed in this case, the algorithm is nonetheless guaranteed to reach states concentrated around an optimal allocation (see, e.g., Chapter 8 of Kushner \& Yin \cite{kushner2003stochastic}).

\subsection{Greedy Path Replication}\label{sec:greedy}
Algorithm~\ref{alg:ascent} has certain drawbacks. First, to implement an allocation  at the end of a measurement period, nodes may need to retrieve new items, which itself incurs additional traffic costs. Second, there is a timescale separation between how often requests arrive, and when adaptations happen; an algorithm that adapts at the request timescale can potentially converge faster. Third, caches are synchronized, and avoiding such coordination is preferable.  Finally, beyond request and response messages, the exchange of additional control messages are required to implement it.

In this section, we propose a greedy eviction policy, to be used with the path replication algorithm, that has none of the above drawbacks. This algorithm does not require any control traffic beyond the traffic generated by message exchanges.  It is asynchronous, and its adaptations happen at the same timescale as requests.
 Each node makes caching decisions \emph{only} when it receives a response message carrying an item: that is, a node decides whether to store an item exactly when it passes through it, and does not introduce additional traffic to retrieve it. Finally, the eviction heuristic is very simple (though harder to analyze than Algorithm~\ref{alg:ascent}).

\begin{algorithm}[t]
\begin{small}
  \caption{\textsc{Greedy Path Replication}}\label{alg:greedy}
    \begin{algorithmic}[1]
   \STATE Execute the following at each $v\in V$:\medskip
   \STATE Initialize $z_v=\mathbf{0}$.
   \WHILE{ (\texttt{true})}    \STATE Wait for new response message.
    \STATE Upon receipt of new message, extract counter $t_{vi}$ and $i$.
    \STATE Update $z_v$ through \eqref{EWMA}.
    \STATE Sort $z_{vj}$, $j\in \catalog$, in decreasing order.
    \IF {$x_{vi}=0$ and $i$ is in top $c_v'$ items}
    \STATE Set $x_{vi}\leftarrow 1$; evict $c_v'+1$-th item.
    \ENDIF
  \ENDWHILE
  \end{algorithmic}
\end{small}
\end{algorithm}

\subsubsection{Algorithm Overview}
Node states are determined by cache contents, i.e., vectors $x_v\in \feasibledomain_1^v$, $v\in V$ and, at all times, $v$ stores all $i$ s.t. $v\in S_i$. The algorithm then proceeds as follows: 
 \begin{packedenumerate}
\item As usual, for each $(i,p)\in\requests$, request messages are propagated until they reach a node $u$ caching the requested item $i\in \catalog$, i.e., with $x_{ui}=1$. Once reaching such a node, a response message carrying the item is generated, and backpropagated over $p$.
\item The response message for a request $(i,p)$ contains a weight counter that is initialized to zero by $u$. Whenever the response traverses an edge in the reverse path, the edge weight is added to the counter. The counter is ``sniffed'' by every node in $p$ that receives the response. Hence, every node $v$ in the path $p$ that is visited by a response learns the quantity:
\begin{align}
 t_{vi} = \textstyle\sum_{k'=k_v(p)}^{|p|-1} w_{p_{k'+1}p_{k'}} \id_{\sum_{\ell=1}^{k'} x_{p_{\ell}i}<1}, \label{quantity}
\end{align}
where, as before, $k_v(p)$ is the position of $v$ in path $p$.
\item  For each item $i$, each node $v$ maintains again an estimate $z_{v} \in \reals_+^{|\catalog|} $ of a subgradient in $\partial_{x_{vi}} L(X)$. This estimate is maintained through an \emph{exponentially weighted moving average} (EWMA) of the quantities $t_{vi}$ collected above. These are adapted each time $v$ receives a response message. If $v$ receives a response message for $i$ at time $t$, then it adapts its estimates as follows:  for all $j\in\mathcal{C}$,
\begin{align}\label{EWMA}
z_{vj}(t)= z_{vj}(t') \cdot e^{-\beta (t-t')}+\beta \cdot t_{vi}\cdot\id_{i=j}, 
\end{align} 
where $\beta>0$ is the EWMA gain, and $t'<t$ be the last time node $v$ it received a response message prior to $t$. 
 Put differently, estimates $z_{vj}$ decay exponentially between responses, while only $z_{vi}$, corresponding to the requested item $i$, contains an additional increment.  \item After receiving a response and adapting $z_v$, the node (a) sorts $z_{vi}$, $i\in C$, in a decreasing order, and (b) stores the top $c_v'$ items, where
 $c_v' = c_v -|\{i: v\in S_i\}|$
is $v$'s capacity excluding  permanent items. \end{packedenumerate}
The above steps are summarized in Algorithm~\ref{alg:greedy}. Note that, upon the arrival of a response carrying item $i$, there are only two possible outcomes after the new $z_v$ values are sorted: either (a) the cache contents remain unaltered, or (b)  item $i$, which was previously not in the cache, is now placed in the cache, and another item is evicted. These are the only possibilities because, under \eqref{EWMA}, all items $j\neq i$ \emph{preserve their relative order}: the only item whose relative position may change is $i$. As $i$ is piggy-backed in the response,  no additional traffic is needed to acquire it.

\subsubsection{Formal Properties} Though simpler to describe and implement, Algorithm~\ref{alg:greedy} harder to analyze than Algorithm~\ref{alg:ascent}. Nonetheless, some intuition on its performance can be made by looking into its fluid dynamics. In particular, let $X(t)=[x_{vi}(t)]_{v\in V,i\in \catalog}$ and $Z(t)=[z_{vi}(t)]_{v\in V,i\in \catalog}$ be the allocation and subgradient estimation matrices at time $t\geq 0$.  Ignoring stochasticity, the ``fluid'' trajectories of these matrices are described by the following ODE:
\begin{subequations}\label{ODE}
\begin{align}
X(t) &\in \argmax_{X\in \feasibledomain_1} \langle X, Z(t) \rangle \label{greedystep}\\
\frac{dZ(t)}{dt}& = \beta\big(\underline{\partial L}(X(t)) - Z(t)\big)
\end{align}
\end{subequations}
 where $\langle A,B\rangle = \trace(AB^\top) = \sum_{v,i}A_{vi}B_{vi}$ is the inner product between  two matrices, and $\underline{\partial L}\in \partial L $ is a subgradient of $L$ at $X$.  \begin{proof}
Using the ``baby Bernoulli'' approximation of a Poisson process, and the fact that $e^{x}=1+x+o(x)$ the EWMA adaptations have the following form: for small enough $\delta>0$,
$$z_{vi}(t+\delta) = (1-\beta\delta) z_{v_i}(t) + \beta \delta \phi_{vi} +o(\delta)$$
where $\expect[\phi_{vi}]=\underline{\partial_{x_{vi}}}L(X)$,  and $\underline{\partial_{x_{vi}} L}$ is given by \eqref{lower}; note that  the matrix $\underline{\partial L}$ is indeed a subgradient. The fluid dynamics \eqref{ODE} then follow by taking $\delta$ to go to zero, and replacing the $\phi_{v}$'s by their expectation. 
\end{proof}
The dynamics \eqref{ODE} are similar (but not identical) to the  ``continuous greedy'' algorithm for submodular maximization \cite{vondrak2008optimal} and the Frank-Wolfe algorithm~\cite{clarkson2010coresets}.

ODE \eqref{ODE} implies that EWMA $Z(t)$ indeed ``tracks'' a subgradient of $L$ at $X$. On the other hand, the allocation selected by sorting---or, equivalently, by \eqref{greedystep}---identifies the most valuable items at each cache w.r.t.~the present estimate of the subgradient.
Hence, even if $z_{v}$ is an inaccurate estimate of the subgradient at $v$, the algorithm  treats it as a correct estimate and places the ``most valuable'' items in its cache. This is why we refer to this algorithm as ``greedy''.  Note that \eqref{greedystep} also implies that $$X(t) \in  \argmax_{Y\in \feasibledomain_2} \langle Y, Z(t) \rangle.  $$
This is because, subject to the capacity and source constraints, $\langle \cdot ,Z\rangle$ is maximized by taking any set of top $c_v'$ items, so an integral solution indeed always exists. The following lemma states that fixed points of the ODE \eqref{ODE}, provided they exist, must be at maximizers of $L$.
\begin{lemma}\label{fixed:point}
Let $X^*\in \feasibledomain_2 $ and $Z^*\in \reals^{|V|\times |\catalog|}$ be such that
$X^*\in \textstyle \argmax_{X\in\feasibledomain_2} \langle X,Z^*\rangle$ and 
$Z^* \in \partial L(X^*).$
Then, $X^*\in\argmax_{X\in \feasibledomain_2} L(X)$.
\end{lemma}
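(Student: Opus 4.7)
My plan is to exploit the concavity of $L$ over the convex set $\feasibledomain_2$ (established in Section~\ref{sec:pipage}), for which the subgradient inequality gives a clean first-order characterization of global maximizers. Specifically, for any concave $L$ and any $Z^*\in \partial L(X^*)$, the defining property of a subgradient is
\begin{equation*}
L(Y) \le L(X^*) + \langle Y - X^*, Z^*\rangle \quad \text{for all } Y\in \feasibledomain_2.
\end{equation*}
So to conclude $L(Y)\le L(X^*)$, it suffices to show that the linear correction term is nonpositive for every $Y\in\feasibledomain_2$.

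The hypothesis that $X^* \in \argmax_{X\in \feasibledomain_2}\langle X, Z^*\rangle$ is exactly this statement: for every $Y\in \feasibledomain_2$, $\langle Y, Z^*\rangle \le \langle X^*, Z^*\rangle$, i.e., $\langle Y - X^*, Z^*\rangle \le 0$. Chaining this with the subgradient inequality gives $L(Y)\le L(X^*)$ for all $Y\in \feasibledomain_2$, which is the desired conclusion.

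There is essentially no hard step here; the only thing to verify is that $L$ is indeed concave on the convex set $\feasibledomain_2$ (so that the subgradient inequality applies globally rather than only locally), and this has already been noted right after the definition in \eqref{relaxedobjective}. The proof therefore reduces to two lines: invoke concavity to bound $L(Y)$ from above by $L(X^*) + \langle Y-X^*, Z^*\rangle$, then invoke the inner-product maximization hypothesis to kill the linear term. No manipulation of the explicit formula for $L$ or of the structure of $\feasibledomain_2$ is needed, which is why the lemma is stated at this level of generality: it expresses the familiar first-order optimality condition for concave maximization under the nonsmooth (subgradient) regime.
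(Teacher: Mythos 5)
Your proof is correct, and it is exactly the standard argument behind Rockafellar's Theorem 27.4, which the paper cites in lieu of writing out the proof: apply the subgradient inequality for the concave function $L$ to bound $L(Y)-L(X^*)$ by $\langle Y-X^*, Z^*\rangle$, then use the inner-product optimality of $X^*$ to conclude that term is nonpositive. So you have supplied the same two-line argument the paper deliberately omits.
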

The lemma holds by the concavity of $L$, and is stated as  Theorem 27.4 of Rockafellar~\cite{rockafellar}, so we omit its proof. Though the conditions stated in the lemma are both necessary and sufficient in our case, the lemma does not imply that a \emph{integral} solution (i.e., one in which $X^*\in \feasibledomain_1$) need exist. In practice, the algorithm may converge to a chain-recurrent set of integral solutions. Though we do not study the optimality properties of this set formally,  our numerical evaluations in Section~\ref{sec:numerical} show that this greedy heuristic has excellent performance in practice, very close to the one attained by the maximizer of $L$.

\section{Offline Problem Equivalence}\label{sec:equivalence}

Each iteration of the projected gradient ascent algorithm of Section~\ref{sec:adaptive} constructs probabilistic allocations that are (a) feasible, and (b) independent across nodes. This motivates us to study the following probabilistic relaxations of \CG, beyond the ``independent Bernoulli'' relaxation \eqref{indep} we discussed in Section~\ref{sec:pipage}. First, consider the variant:
\begin{subequations}\label{independentcaches}
\begin{align}
\text{Max.:}& \quad \expect_\mu[F(X)] = \textstyle\sum_{X\in \feasibledomain_1}\mu(X)F(X)\\
\text{subj.~to:}& \quad \mu\text{ is a pr.~distr.~over }\feasibledomain_1\text{ satisfying }\eqref{productform}.
\end{align}
\end{subequations}
I.e., we seek random cache allocations sampled from a joint distribution $\mu$ having product form \eqref{productform}. In addition, consider the following (more general) variant of \CG:
\begin{subequations}\label{probabilistic}
\begin{align}
\text{Maximize:}& \quad \expect_\mu[F(X)] = \textstyle\sum_{X\in \feasibledomain_1}\mu(X)F(X)\\
\text{subj.~to:}& \quad \mu\text{ is a pr.~distr.~over }\feasibledomain_1.
\end{align}
\end{subequations}
Our results and, in particular, Lemma~\ref{placementcorrectness}, have the following surprising implication: all three relaxations \eqref{indep}, \eqref{independentcaches}, and \eqref{probabilistic} \emph{are in fact equivalent to} \CG.  
\begin{thm}\label{equivalence}Let $X^*$,   $Y^{*}$, $\mu^*$, and $\mu^{**}$ be  optimal solutions to \eqref{deterministic},\eqref{indep}, \eqref{independentcaches}, and \eqref{probabilistic}, respectively. Then,
$$ F(X^*)  = F(Y^*)=\expect_{\mu^*}[F(X)]=\expect_{\mu^{**}}[F(X)].$$
\end{thm}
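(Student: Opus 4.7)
The plan is to prove a cycle of inequalities among the four quantities, forcing equality throughout:
$$F(X^*) \;\leq\; F(Y^*) \;\leq\; \expect_{\mu^*}[F(X)] \;\leq\; \expect_{\mu^{**}}[F(X)] \;\leq\; F(X^*).$$

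Three of these bounds are essentially immediate. First, $F(X^*)\leq F(Y^*)$ is precisely \eqref{trivial}: since $\feasibledomain_1\subseteq \feasibledomain_2$ and the multi-linear extension agrees with the original objective at $\{0,1\}$-valued points, \eqref{indep} maximizes the same function over a larger domain. Second, every product-form distribution is in particular a distribution over $\feasibledomain_1$, so the feasible set of \eqref{independentcaches} is contained in that of \eqref{probabilistic}, giving $\expect_{\mu^*}[F(X)]\leq \expect_{\mu^{**}}[F(X)]$. Third, for any distribution $\mu$ over $\feasibledomain_1$, $\expect_\mu[F(X)]=\sum_{X\in \feasibledomain_1}\mu(X)F(X)\leq F(X^*)\sum_{X}\mu(X)=F(X^*)$, yielding the last bound.

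The substantive step is the middle inequality $F(Y^*)\leq \expect_{\mu^*}[F(X)]$. Here the idea is to exhibit an explicit feasible $\mu$ for \eqref{independentcaches} whose expected caching gain equals $F(Y^*)$. Given $Y^*\in \feasibledomain_2$, I apply Lemma~\ref{placementcorrectness} independently at each node $v\in V$ to its marginals $y_v^*$, producing a per-node distribution $\mu_v$ supported on $\bar{\feasibledomain}_1^v$ with $\expect_{\mu_v}[x_{vi}]=y_{vi}^*$ for all $i\in\catalog$. For $v\in \source_i$ we have $y_{vi}^*=1$, which forces $\mu_v$ to store $i$ with probability one; hence the source constraints are also satisfied and $\mu_v$ is supported on $\feasibledomain_1^v$. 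Setting $\mu(X)=\prod_{v\in V}\mu_v(x_v)$ defines a distribution over $\feasibledomain_1$ of the product form \eqref{productform}, i.e., a feasible solution of \eqref{independentcaches}. The derivation \eqref{equality} used only independence across caches together with the well-routedness of paths (so that no node appears twice in a given product), so the same calculation applies here and yields $\expect_\mu[F(X)]=F(Y^*)$. Optimality of $\mu^*$ then gives $\expect_{\mu^*}[F(X)]\geq \expect_\mu[F(X)]=F(Y^*)$, closing the chain.

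I do not anticipate any serious obstacle. The only nontrivial ingredient is the rounding construction of Lemma~\ref{placementcorrectness}, which guarantees that an arbitrary marginal vector in $\feasibledomain_2^v$ can be realized exactly by some distribution supported on integral allocations satisfying the per-node capacity constraint; this is precisely what lets one pass from the independent-Bernoulli relaxation \eqref{indep} back to a distribution over feasible integral allocations without losing any objective value, and it is this lossless passage that collapses all four optima.
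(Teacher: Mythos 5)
Your proof is correct and follows essentially the same route as the paper: both establish a cyclic chain of inequalities among the four optima, both invoke \eqref{trivial} and the inclusion of product-form distributions among all distributions for the easy steps, and both rely on Lemma~\ref{placementcorrectness} together with \eqref{equality} for the key passage from $F(Y^*)$ to $\expect_{\mu^*}[F(X)]$. The only cosmetic difference is that the paper closes the cycle via $\expect_{\mu^{**}}[F(X)]\le F(X^*)$ by picking a maximizer in the support of $\mu^{**}$, whereas you give the slightly cleaner direct averaging argument.
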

\begin{proof}
We  establish the following inequalities:
$$\expect_{\mu^{**}}[F(X)] \leq F(X^*) \leq F(Y^*) \leq \expect_{\mu^*}[F(x)] \leq \expect_{\mu^{**}}[F(X)]$$
To see that $\expect_{\mu^{**}}[F(X)] \leq F(X^*)$, let $D =\supp(\mu^{**})\subseteq \feasibledomain_1$ be the support of $\mu^*$. Let 
$X'\in \argmax_{X\in D} F(X)$ be an allocation maximizing  $F$ over $D$ (as $D$ is finite, this exists). Then, by construction, $\expect_{\mu^{**}}[F(X)]\leq F(X')\leq F(X^*)$, as $X'\in \feasibledomain_1$. $F(X^*)\leq F(Y^*)$ by \eqref{trivial}, as \eqref{indep} is a relaxation of \eqref{deterministic}.
To see  that $F(Y^*) \leq \expect_{\mu^*}[F(X)]$, note that, by Lemma~\ref{placementcorrectness}, since $Y^*\in \feasibledomain_2$, there exists a measure $\mu'$ that has a product form and whose marginals are $Y^*$. Since $\mu'$ has a product form, it satisfies \eqref{equality}, and $F(Y^*)=\expect_{\mu'}[F(X)]\leq \expect_{\mu^*}[F(X)]$.  Finally, $\expect_{\mu^*}[F(X)]\leq \expect_{\mu^{**}}[F(X)]$, as the former is the expected cost under a restricted class of distributions $\mu$, namely, ones that have the product form \eqref{productform}.\hspace{\stretch{1}}\qed
\end{proof}
 Theorem~\ref{equivalence} is specific to \CG: e.g., Ineq.~\eqref{trivial} can be strict in other problems solvable through the pipage rounding method. The theorem has some non-obvious, interesting implications. First, equivalence of \eqref{independentcaches} to \eqref{indep} implies that satisfying capacity constraints in expectation, rather than exactly, \emph{does not} improve the caching gain. Similarly, the equivalence of  \eqref{independentcaches} to \eqref{probabilistic} implies that considering only distributions that describe \emph{independent} caches does not restrict the caching gain attainable: independent caches are as powerful as fully randomized (or deterministic) caches. Finally, as \CG is NP-hard, so are all three other problems.

\section{Numerical Evaluation}\label{sec:numerical}
\begin{figure*}
\begin{center}
\hspace*{-1em}
\includegraphics[width=0.333333\textwidth]{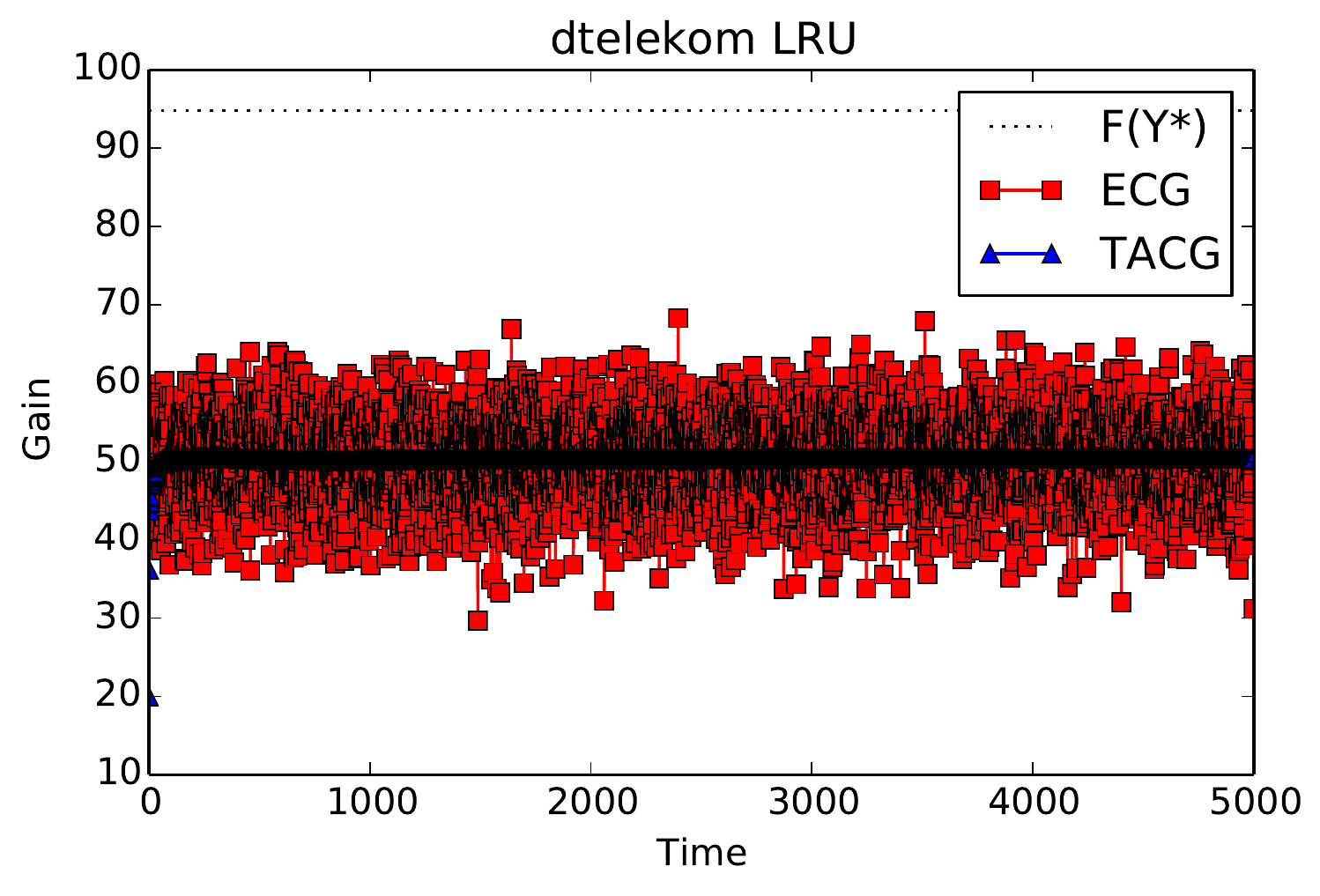}\hspace*{-1em}
\includegraphics[width=0.333333\textwidth]{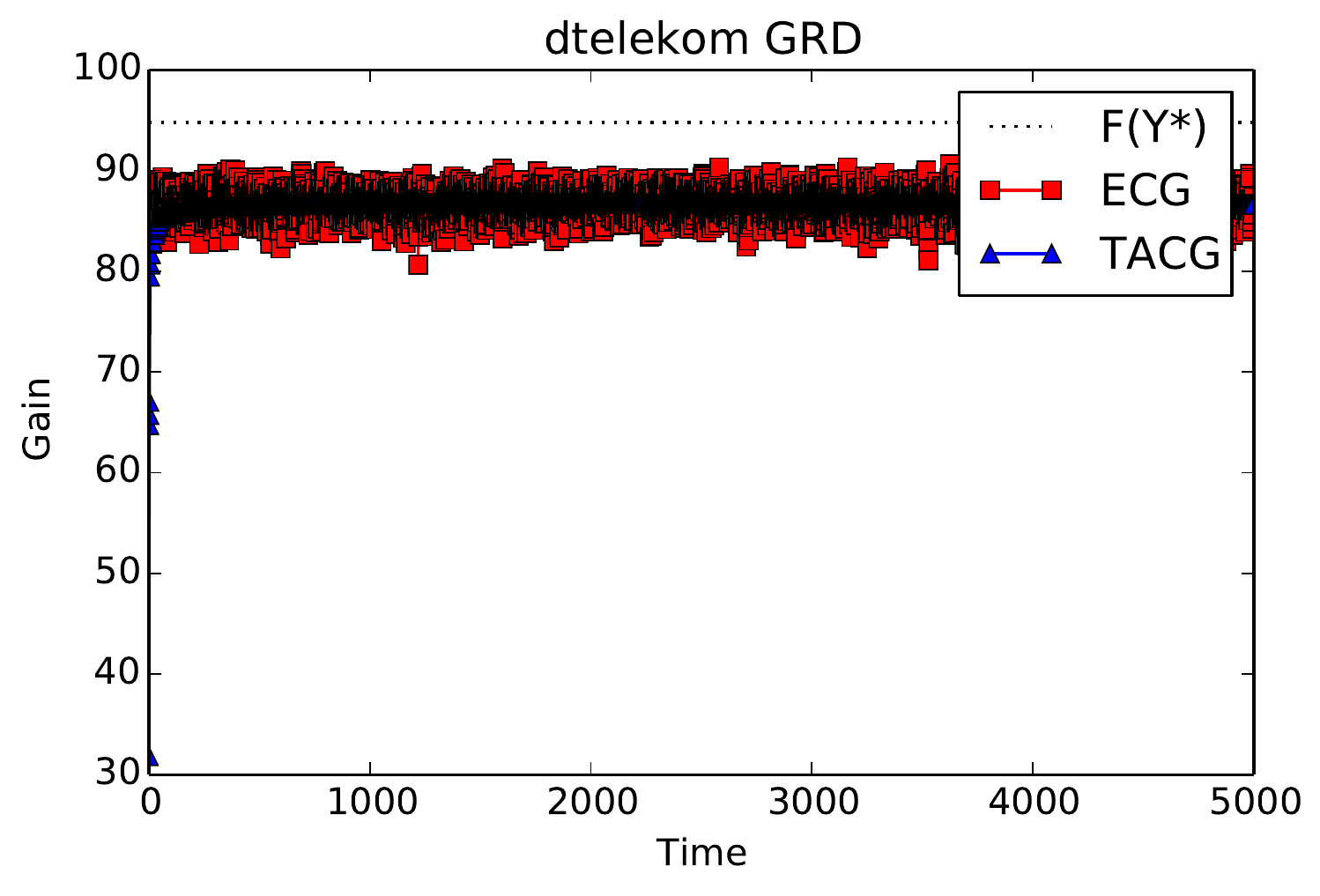}\hspace*{-1em}
\includegraphics[width=0.333333\textwidth]{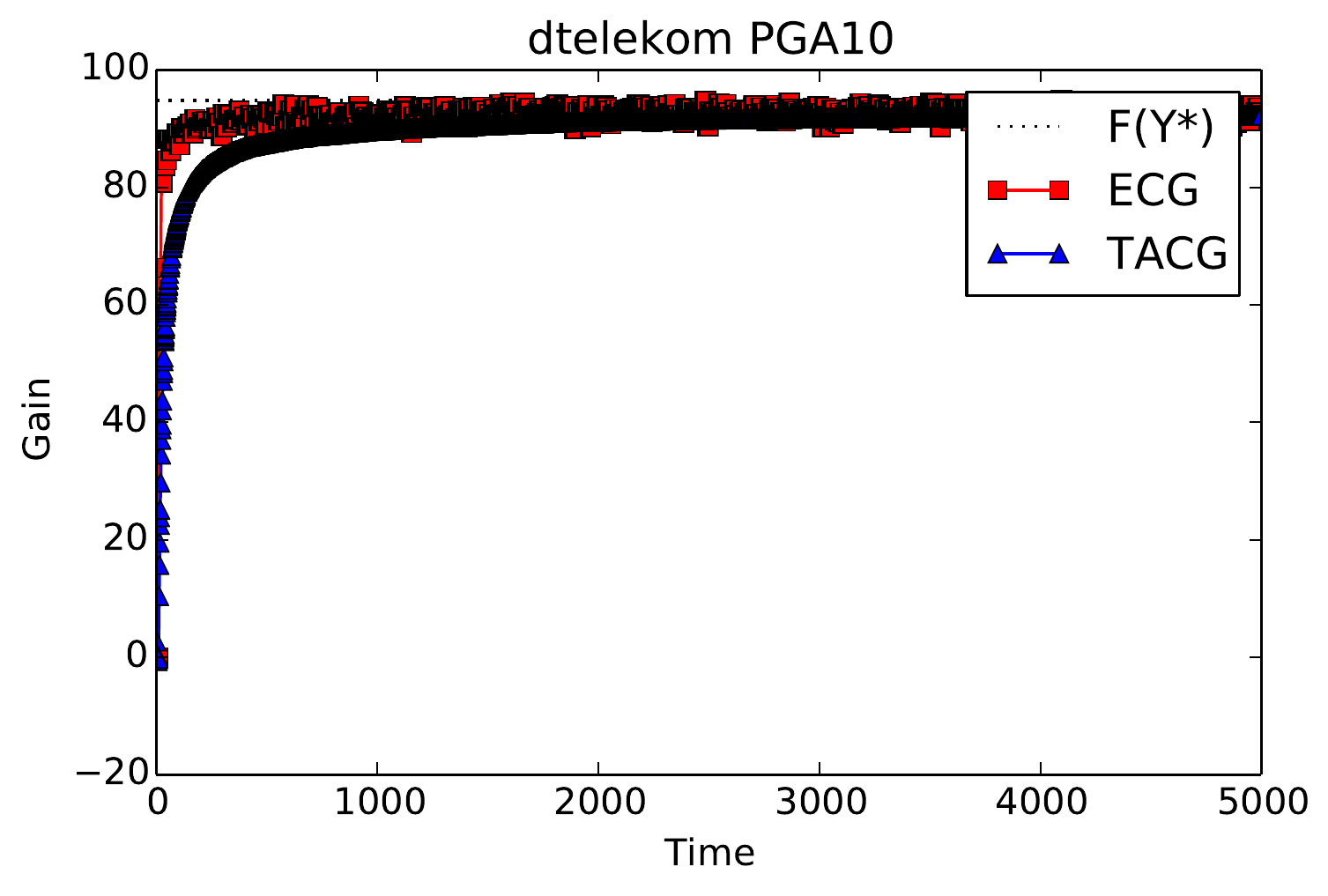}\hspace*{-1em}
\end{center}
\caption{Trajectories of the expected caching gain \texttt{ECG} and the time average caching gain \texttt{TACG} under the \texttt{LRU}, \texttt{GRD}, and \texttt{PGA} algorithms, the latter with $T=10$. The value $F(Y^*)$ for this experiment is shown in a dashed line. }\label{trajectories}
\end{figure*}

\begin{figure*}[!t]
\includegraphics[width=\textwidth]{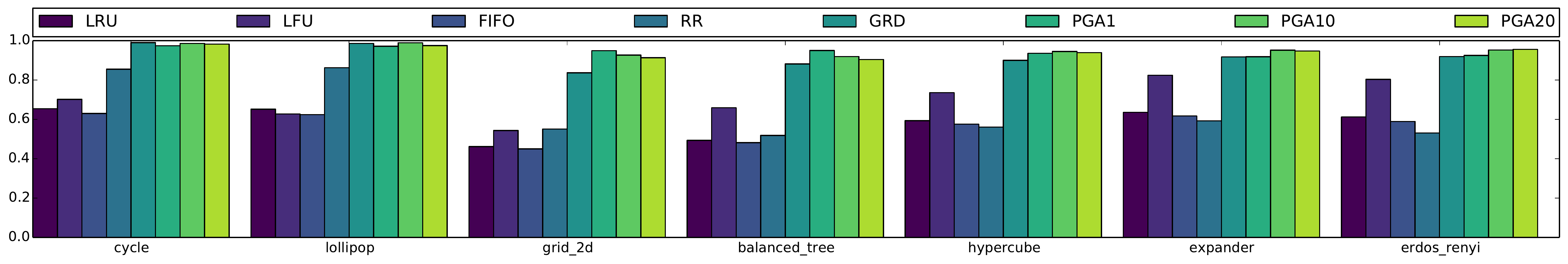}
\includegraphics[width=\textwidth]{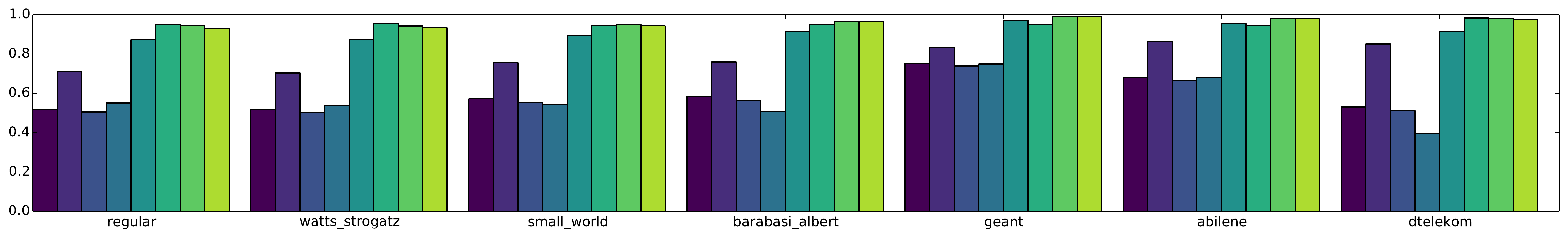}
\caption{Ratio of expected caching gain \texttt{ECG} to $F(Y^*)$, as given in Table~\ref{networks} under different networks and caching strategies. The greedy algorithm \texttt{GRD} performs almost as well as \texttt{PGA} in all cases. Both algorithms significantly outperform the remaining eviction policies.}\label{relativeperformance}
\end{figure*}

We simulate Algorithms~\ref{alg:ascent} and~\ref{alg:greedy} over synthetic and real networks, and compare their performance to path replication  combined with \texttt{LRU}, \texttt{LFU}, \texttt{FIFO}, and random replacement (\texttt{RR}) caches. Across the board,  greedy path replication performs exceptionally well, attaining at least 95\% of the expected caching gain attained by  Algorithm~\ref{alg:ascent}, while both significantly outperform traditional eviction policies.

\sloppy
\noindent\textbf{Topologies.} The networks we consider are summarized in Table~\ref{networks}. The first six graphs are deterministic. Graph \texttt{cycle} is a simple cyclic graph, and \texttt{lollipop} is a clique (i.e., complete graph), connected to a path graph of equal size. Graph \texttt{grid\_2d} is a two-dimensional square grid, \texttt{balanced\_tree} is a complete binary tree of depth 6, and \texttt{hypercube} is a 7-dimensional hypercube. Graph \texttt{expander} is a Margulies-Gabber-Galil expander \cite{gabber1981explicit}.
The next 5 graphs are random, i.e., were sampled from a probability distribution. Graph \texttt{erdos\_renyi} is an Erd\H{o}s-R\'enyi graph with parameter $p=0.1$, and \texttt{regular} is a 3-regular graph sampled u.a.r. The \texttt{watts\_strogatz} graph is a graph generated according to the Watts-Strogatz model of a small-world network \cite{watts1998collective} comrizing a cycle and $4$ randomly selected edges, while \texttt{small\_world} is the navigable small-world graph by Kleinberg \cite{kleinberg2000small}, comprising a grid with additional long range edges. The preferential attachment model of Barab\'asi and Albert \cite{barabasi1999emergence}, which yields powerlaw degrees, is used for \texttt{barabasi\_albert}.
Finally, the last 3 graphs represent the Deutche Telekom, Abilene, and GEANT backbone networks \cite{rossi2011caching}.

\fussy

\begin{table}[!t]
\begin{footnotesize}
\begin{tabular}{lp{1em}p{1em}p{1em}p{1em}p{1em}p{1em}c}
Graph & $|V|$ & $|E|$ & $|\catalog|$ & $|\requests|$ &  $|Q|$ & $c_v'$ & $F(Y^*)$ \\
\hline
\hline
\texttt{cycle}  & 30 & 60 & 10 & 100 & 10  & 2  &847.94       \\
 \texttt{lollipop} & 30 & 240 & 10 & 100 & 10 & 2 &735.78\\
\texttt{grid\_2d} &100 & 360  & 300 & 1K & 20 & 3& 381.09\\
\texttt{balanced\_tree} & 127 & 252  & 300 & 1K & 20 & 3& 487.39\\
\texttt{hypercube}&128 & 896  & 300 & 1K & 20 & 3& 186.29\\
 \texttt{expander} &100 & 716  & 300 & 1K & 20 & 3& 156.16\\
\hline
 \texttt{erdos\_renyi} &100 & 1042  & 300 & 1K & 20 & 3& 120.70\\
\texttt{regular} &100 & 300  & 300 & 1K & 20 & 3& 321.63 \\
 \texttt{watts\_strogatz} &100 & 400  & 300 & 1K & 20 & 3& 322.49  \\
 \texttt{small\_world} &100 & 491  & 300 & 1K & 20 & 3& 218.19 \\
 \texttt{barabasi\_albert} &100 & 768  & 300 & 1K & 20 & 3& 113.52 \\
\hline
 \texttt{geant} &22 & 66& 10&100 & 10 & 2 & 203.76\\
 \texttt{abilene} & 9 & 26 & 10 &100 & 10 & 2 & 121.25\\
 \texttt{dtelekom} & 68 & 546 & 300 & 1K & 20 & 3 &94.79\\
\end{tabular}
\end{footnotesize}
\caption{Graph Topologies and Experiment Parameters.}\label{networks}
\end{table}

\noindent\textbf{Experiment Setup.} We evaluate the performance of different adaptive strategies over the graphs in Table~\ref{networks}. Given a graph $G(V,E)$, we generate a catalog $\catalog$, and assign a cache to each node in the graph. For every item $i\in \catalog$, we designate a node  selected uniformly at random (u.a.r.) from $V$ as a source for this item. We set the capacity $c_v$ of every node $v$ so that $c_v'=c_v-|\{i:v\in S_i\}|$ is constant among all nodes in $V$. We assign a weight to each edge in $E$ selected u.a.r.~from the interval $[1,100]$.  We then generate a set of demands $\requests$ as follows. First, to ensure path overlaps, we select $|Q|$ nodes in $V$ u.a.r., that are the only nodes that generate requests; let $Q$ be the set of such query nodes. We generate a set of requests starting from a random node in $Q$, with the item requested selected from $\catalog$ according to a Zipf distribution with parameter $1.2$. The request is then routed over the shortest path between the node in $Q$ and the designated source for the requested item. We assign a rate $\lambda_{(i,p)}=1$ to every request in $\requests$, generated as above.

The values of $|\catalog|$, $|\requests|$ $|Q|$, and $c_v$ for each experiment are given in Table~\ref{networks}. For each experiment, we also provide in the last column the quantity
$F(Y^*),$ for $Y^* \in\argmax_{Y\in \feasibledomain_2} L(Y), $
i.e., the expected caching gain under a product form distribution that maximizes the relaxation $L$. Note that, by Theorem~\ref{approximation}, this is within a $1-1/e$ factor from the optimal expected caching gain.

\noindent\textbf{Caching Algorithms and Measurements.}
In all of the above networks, we evaluate the performance of Algorithms~\ref{alg:ascent} and~\ref{alg:greedy}, denoted by \texttt{PGA} (for Projected Gradient Ascent) and \texttt{GRD} (for Greedy), respectively. In the case of \texttt{PGA}, we tried different measurement periods $T=1.0,10.0,20.0$, termed \texttt{PGA1}, \texttt{PGA10}, and \texttt{PGA20}, respectively. We  implemented the algorithm both with state smoothening  \eqref{slide} and without (whereby allocations are sampled from marginals $Y^{(k)}$ directly). For brevity, we report only the non-smoothened versions, as time-average performance was nearly identical for both versions of the algorithm.

We also compare to path replication with \texttt{LRU}, \texttt{LFU}, \texttt{FIFO}, and \texttt{RR} eviction policies. In all cases, we simulate the network for 5000 time units. We collect measurements at epochs of a Poisson process with rate 1.0 (to leverage the PASTA property). In particular, at each measurement epoch, we extract the current allocation $X$, and compute the expected caching gain (\texttt{ECG}) as $F(X)$. In addition, we keep track of the actual cost of each request routed through the network, and compute the \text{time average caching gain} (\texttt{TACG}), measured as the difference of the cost of routing till the item source, minus the time average cost per request.

\noindent\textbf{Results.} Figure~\ref{trajectories} shows the trajectories of the expected caching gain \texttt{ECG} and the time average caching gain \texttt{TACG} under the \texttt{LRU}, \texttt{GRD}, and \texttt{PGA10} algorithms. All three algorithms converge relatively quickly to a steady state. \texttt{PGA10} converges the slowest but it indeed reaches $F(Y^*)$, as expected. In addition, the greedy heuristic \texttt{GRD} performs exceptionally well, converging very quickly (faster than \texttt{PGA}) to a value close to $F(Y^*)$. 
In contrast, the allocations reached in steady state by \texttt{LRU} are highly suboptimal, close to 50\% of $F(Y^*)$. Moreover, \texttt{LRU} exhibits high variability, spending considerable time in states with as low as 35\% of $F(Y^*)$. We note that the relatively low variability of both \texttt{GRD} and \texttt{PGA} is a desirable feature in practice, as relatively stable caches are preferred by network administrators.

The above observations hold across network topologies and caching algorithms. In Figure \ref{relativeperformance}, we plot the relative performance w.r.t \texttt{ECG} of all eight algorithms, normalized to $F(Y^*)$. We compute this as follows: to remove the effect of initial conditions, we focus on the interval $[1000,5000]$, and average the \texttt{ECG} values in this interval. 

We see that, in all cases, \texttt{PGA} attains $F(Y^*)$ for all three  values of the measurement period $T$. Moreover, the simple heuristic \texttt{GRD} has excellent performance: across the board, it attains more than 95\% of $F(Y^*)$, sometimes even outperforming \texttt{PGA}.
Both algorithms consistently outperform all other eviction policies. We observe that \texttt{RR} and \texttt{LFU} perform quite well in several cases, and that ``hard'' instances for one appear to be ``easy''  for the other. 

The differentiating instances, where performance is reversed, are the \texttt{cycle} and \texttt{lollipop} graphs: though small, these graphs contain long paths, in contrast to the remaining graphs that have a relatively low diameter. Intuitively, the long-path setting is precisely the scenario where local/myopic strategies like \texttt{LRU}, \texttt{LFU}, and \texttt{FIFO} make suboptimal decisions, while \texttt{RR}'s randomization helps.   
 
\section{Conclusions}\label{sec:conclusions}
\sloppy Our framework, and adaptive algorithm, leaves many open questions, including jointly optimizing caching \emph{and} routing decisions; this is even more pertinent in the presence of congestion, as in \cite{yeh2014vip}. The excellent performance of the greedy heuristic in our simulations further attests to the need for establishing its performance formally. Finally, studying adaptive caching schemes for fountain-coded content---which, by~\cite{shanmugam2013femtocaching}, has interesting connections to the relaxation $L$---is also an important open question.

\fussy

\section{Acknowledgements}
E. Yeh gratefully acknowledges support from National Science Foundation grant CNS-1423250 and a Cisco Systems research grant.

 \bibliographystyle{abbrv}

\appendix

\section{Proof of Theorem~{\protect \lowercase{\ref{approximation}}}}\label{proofofapprox}
To begin with, for all $Y\in \feasibledomain_2$, we have:
\begin{align} (1-\frac{1}{e}) L(Y)  \leq F(Y) \leq L(Y). \label{lem:sandwitch}\end{align}
To see this, note that
\begin{align*}
F(Y) &= \sum_{(i,p) \in \requests }\lambda_{(i,p)}  \sum_{k=1}^{|p|-1} w_{p_{k+1}p_{k}} \expect_{\nu}\left[\min\left\{1,\sum_{k'=1}^k x_{p_{k'}i}\right\}\right]\\
&\leq\sum_{(i,p) \in \requests }\lambda_{(i,p)}  \sum_{k=1}^{|p|-1} w_{p_{k+1}p_{k}} \min\left\{1,\sum_{k'=1}^k\expect_{\nu}\left[ x_{p_{k'}i}\right]\right\}
  \end{align*}
 by the concavity of the $\min$ operator, so
$F(Y) \leq L(Y).$
On the other hand, by Goemans and Williamson \cite{goemans1994new},
$$1-   \prod_{k'=1}^k (1-y_{p_{k'}i})\geq \left(1-(1 - 1/k)^k \right)\min\left\{1,\sum_{k'=1}^k y_{p_{k'}i}\right\}, $$
and the first inequality of the statement of the lemma follows as $(1-1/k)^k \leq \frac{1}{e}$.
By the optimality of $Y^*$ in $\feasibledomain_2$, clearly $F(Y^{**})\leq F(Y^*)$. By Lemma~\ref{lem:sandwitch} and the optimality of $Y^{**}$,
$F(Y^*)\leq  L(Y^*)\leq   L(Y^{**}) \leq \frac{e}{e-1}F(Y^{**}). $ \hspace*{\stretch{1}}\qed

\section{Proof of Lemma~{\protect \lowercase{\ref{placementcorrectness}}} }\label{appendix:corr}

We now prove the correctness of  Algorithm~\ref{alg:allocation}, by showing that   it produces a $\mu_v$ over $\bar{\feasibledomain}_1^v$ with marginals $\bar{y}_v$. For all $i\in \catalog$, denote by
\begin{align*}s_i &= \sum_{k=0}^{i-1}\bar{y}_{vk}, & t_i &= s_i+\bar{y}_{vi}, \tau_i&=t_i-\lfloor t_i \rfloor, \end{align*}
the quantities computed at Line \eqref{quantities} of the algorithm. Note that $s_0=0$,$t_{|\catalog|}=c_v$, and that $\tau_i$ is the fractional part of each $t_i$. Let also
$$0=\tau^{(0)}<\tau^{(1)}<\ldots<\tau^{(K)}=1$$
be the sequence of sorted $\tau_i$'s, with duplicates removed, as in Lines~\ref{sort}-\ref{sorted} of the algorithm. Note that, by construction, $K$ can be at most $|\catalog|$.

For $\ell\in \{0,\ldots,c-1\}$, $k\in\{0,\ldots,K-1\}$, let $A_\ell^k =(\ell+\tau^{(k)},\ell+\tau^{(k+1)}) $ be the open intervals in Line~\ref{find}. Then, the following lemma holds.
\begin{lemma}\label{uniqueness}
 For every $\ell\in\{1,\ldots,c-1\}$ and $k\in\{0,\ldots,K-1\}$, there exists exactly one $i\in \catalog$ s.t.  $(\ell+\tau^{(k)},\ell+\tau^{(k+1)})\subset [s_i,t_i]$. Moreover, any such $i$ must have $\bar{y}_{vi}>0$.
\end{lemma}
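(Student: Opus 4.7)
The proof plan rests on the tessellation picture of Figure~\ref{fig:allocation}: the intervals $I_i \equiv [s_i, t_i]$, for $i \in \catalog$, form a partition of $[0, c_v]$ (they meet only at endpoints), because $s_0 = 0$, $s_{i+1} = t_i$, and $t_{|\catalog|} = \sum_{i} \bar{y}_{vi} = c_v$. Moreover each $I_i$ has length $\bar{y}_{vi} \in [0,1]$, so it spans at most two rows of the $c_v \times 1$ box, and its two endpoints reduce $\pmod 1$ to $\tau_{i-1}$ and $\tau_i$ respectively (with $\tau_{-1} \equiv 0$).

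I would first show \emph{uniqueness}. Fix $\ell$ and $k$, and suppose for contradiction that two distinct indices $i \neq i'$ both satisfy $A_\ell^k = (\ell + \tau^{(k)}, \ell + \tau^{(k+1)}) \subset I_i \cap I_{i'}$. Then $I_i$ and $I_{i'}$ would share a non-degenerate sub-interval, contradicting the fact that the $I_i$'s partition $[0, c_v]$ and intersect only at their endpoints.

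Next I would establish \emph{existence}. Pick any interior point $p \in A_\ell^k$; since the $I_i$'s cover $[0, c_v]$, some $i$ has $p \in I_i$. I claim $A_\ell^k \subset I_i$ entirely. If not, then by connectedness of $A_\ell^k$ some endpoint $s_j$ or $t_j$ of a rectangle lies in the open interval $A_\ell^k$. Since $s_j = t_{j-1}$ (taking $s_0 = 0 = t_{-1}$ as a boundary case), it suffices to consider some $t_j \in A_\ell^k \subset (\ell, \ell+1)$. Then $\lfloor t_j \rfloor = \ell$ and its fractional part equals $\tau_j = t_j - \ell \in (\tau^{(k)}, \tau^{(k+1)})$, which contradicts the fact that $\tau^{(k)}$ and $\tau^{(k+1)}$ are \emph{consecutive} in the sorted deduplicated sequence of all fractional parts $\{\tau_i\}_{i \in \catalog}$. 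Hence no rectangle boundary lies inside $A_\ell^k$, so the containing rectangle $I_i$ covers the whole open interval. Finally, the containing $I_i$ must have $\bar{y}_{vi} > 0$: a rectangle with $\bar{y}_{vi} = 0$ reduces to a single point $\{s_i\}$, which cannot contain an interval of positive length.

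I expect the main subtlety to lie in verifying that the \emph{global} sorted sequence of fractional parts $\tau^{(0)} < \cdots < \tau^{(K)}$ correctly describes the cutting structure inside \emph{every} row simultaneously, not just one row in isolation. Concretely, one must observe that while not every $\tau^{(k)}$ need correspond to an actual rectangle boundary inside row $\ell$, the partition $\{A_\ell^k\}_{k}$ is always a refinement of the partition induced within row $\ell$ by the $I_i$'s; refinements only subdivide existing cells and never cross a boundary, which is exactly what the contradiction argument above formalizes via the consecutiveness of $\tau^{(k)}, \tau^{(k+1)}$. Once this observation is in place, both uniqueness and existence follow cleanly.
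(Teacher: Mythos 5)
Your proof is correct and uses essentially the same argument as the paper's: the crux in both is that a rectangle boundary $t_j$ falling inside the open interval $A_\ell^k$ would force its fractional part $\tau_j$ strictly between the consecutive sorted values $\tau^{(k)}$ and $\tau^{(k+1)}$, which is impossible. The only difference is organizational --- you derive uniqueness directly from the fact that the $[s_i,t_i]$ tile $[0,c_v]$ and meet only at endpoints, whereas the paper first shows at most one $[s_i,t_i]$ \emph{intersects} $A_\ell^k$ and then upgrades intersection to containment.
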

\begin{proof}
By construction $\tau^{(k)}<\tau^{(k+1)}$, so $A_\ell^k$ is a non-empty interval in $[0,c_v]$. For every $i\in \catalog$, $[s_i,t_i]$ are closed intervals (possibly of length zero, if $\bar{y}_{vi}=0$) such that $\bigcup_{i\in \catalog} [s_i,t_i]=[0,c_v]$. Hence, there must be at least one $[s_i,t_i]$ s.t. $A_\ell^k \cap [s_i,t_i] \neq \emptyset.$ To see that there can be no more than one, suppose that $A_\ell^k$ intersects more than one sets. Then it must intersect at least two consequtive sets, say $[s_j,t_j]$, $[s_{j+1},t_{j+1}]$ where, by construction $t_j=s_{j+1}$. This means that $t_j\in A_\ell^k$, and, which in turn implies that $\tau_j=t_j-\lfloor t_j\rfloor  \in (\tau^{(k)},\tau^{(k+1)})$, which is a contradiction, as the sequence $\tau^{(\cdot)}$ is sorted.

Hence, $A_\ell^k\cap[s_i,t_i]\neq \emptyset$ for exactly one $i\in \catalog$. For the same reason as above, $t_i$ cannot belong to $A_\ell^k$; if it did, then its fractional part $\tau_i$ would belong to $(\tau^{(k)},\tau^{(k+1)})$, a contradiction. Neither can $s_i$; if $s_i\in A_\ell^k$, then $s_i>0$, so $i>0$. This means that $s_i=t_{i-1}$, and if $t_{i-1}\in A_\ell^k$, we again reach a contradiction. Hence, the only way that $A_\ell^k\cap [s_i,t_i]\neq \emptyset$ is if $A_\ell^k\subset [s_i,t_i]$. Moreover, this implies that $s_i<=\tau^{(k)}$ and $\tau^{(k+1)}<=t_i$, which in turn implies that $\bar{y}_{vi}=t_i-s_i>0$, so the last statement of the lemma also follows.\hspace*{\stretch{1}}\qed
\end{proof}
The above lemma implies that Line \ref{find} of the algorithm always finds a unique $i\in \catalog$, for every $\ell\in\{1,\ldots,c_v-1\}$ and $k \in\{0,\ldots,K-1\}$. Denote this item by $i(k,\ell)$. The next lemma states that all such items obtained for different values of $\ell$ are distinct:
\begin{lemma}\label{span} For any two $\ell,\ell'\in \{1,\ldots,c_v-1\} $ with $\ell\neq \ell'$, $i(k,\ell)\neq i(k,\ell')$.
\end{lemma}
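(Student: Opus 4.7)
The plan is to argue by contradiction, leveraging the fact that each marginal $\bar{y}_{vi}$ lies in $[0,1]$, which in turn bounds the length of each interval $[s_i,t_i]$ by $1$. Recall from the preceding Lemma~\ref{uniqueness} that $A_\ell^k=(\ell+\tau^{(k)},\ell+\tau^{(k+1)})\subset [s_{i(k,\ell)},t_{i(k,\ell)}]$. Thus if two distinct $\ell$'s mapped to the same item, the corresponding interval $[s_i,t_i]$ would have to contain two shifted copies of $A^k_\cdot$ that are separated vertically by a full integer unit, making its length exceed~$1$.

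Concretely, I would suppose for contradiction that $i(k,\ell)=i(k,\ell')=i$ for some $\ell\neq\ell'$ in $\{0,\ldots,c_v-1\}$, and without loss of generality assume $\ell<\ell'$, so $\ell'\geq \ell+1$. By Lemma~\ref{uniqueness}, both $A_\ell^k\subset [s_i,t_i]$ and $A_{\ell'}^k\subset [s_i,t_i]$, which forces $s_i\leq \ell+\tau^{(k)}$ and $t_i\geq \ell'+\tau^{(k+1)}$. Subtracting,
\begin{equation*}
\bar{y}_{vi}\;=\;t_i-s_i\;\geq\;(\ell'-\ell)+(\tau^{(k+1)}-\tau^{(k)})\;>\;1,
\end{equation*}
since $\ell'-\ell\geq 1$ and $\tau^{(k+1)}>\tau^{(k)}$ by the strict ordering produced in Lines~\ref{sort}--\ref{sorted}. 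This contradicts $\bar{y}_{vi}\in[0,1]$, which holds because $\bar{y}_v\in\feasibledomain_2^v$ is a vector of marginal probabilities.

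The argument is essentially a one-line length comparison; the only subtlety is making sure the containment is strict enough to yield $t_i-s_i>1$ rather than just $\geq 1$, which is why the strict inequality $\tau^{(k+1)}>\tau^{(k)}$ (guaranteed by the deduplication step of the sort) is crucial. No further case analysis is needed, and together with Lemma~\ref{uniqueness} this establishes that for each $k$ the map $\ell\mapsto i(k,\ell)$ is an injection from $\{0,\ldots,c_v-1\}$ into $\catalog$, so the vector $x$ constructed in the inner loop of Algorithm~\ref{alg:allocation} indeed has exactly $c_v$ ones, as required for membership in $\bar{\feasibledomain}_1^v$.
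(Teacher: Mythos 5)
Your proof is correct and follows essentially the same contradiction argument as the paper: assume $i(k,\ell)=i(k,\ell')$, use the containment $A_\ell^k, A_{\ell'}^k\subset[s_i,t_i]$ from Lemma~\ref{uniqueness} to lower-bound $t_i-s_i$ by more than $1$, and contradict $\bar{y}_{vi}\leq 1$. The paper derives the strict inequality from $\ell'+\tau^{(k)}<t_i$ while you use $t_i\geq \ell'+\tau^{(k+1)}$ together with $\tau^{(k+1)}>\tau^{(k)}$, but this is a cosmetic difference in how the endpoints are chosen, not a different route.
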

\begin{proof} Suppose that  $i(k,\ell) =i(k,\ell')$ for some $\ell'>\ell$. This means that both $A_\ell^k\subset [s_i,t_i]$ and $A_{\ell'}^k \subset [s_i,t_i]$. Thus,
\begin{align*}s_i &\leq \ell + \tau^{(k)} & &\text{and} & \ell'+\tau^{(k)} &<t_i . \end{align*} 
On the other hand, $\ell'>\ell$, so $\ell'>=\ell+1$. So the above imply that $\bar{y}_{vi}=t_i-s_i>\ell'-\ell>=1$, a contradiction, as any feasible $\bar{y}_{vi}$ must be at most 1. \hspace*{\stretch{1}}
\end{proof}
Observe that $\mu_v$ constructed by the algorithm is indeed a probability distribution, as (a) the differences $\tau^{(k+1)}-\tau^{(k)}$ are, by construction, positive, and (b) their sum is 1. Moreover, the above two lemmas imply that the vectors $x$ constructed by the Algorithm contain exactly $c_v$ non-zero elements, so $\mu_v$ is indeed a distribution over $\feasibledomain_1$. The last lemma establishes that the constructed distribution has the desirable marginals, thereby completing the proof of correctness.
\begin{lemma}\label{marginallemma} For any $i\in \catalog$ such that $\bar{y}_{vi}>0$, $$\sum_{x\in \supp(\mu_v): x_i=1} \mu_v(x) = \bar{y}_{vi} .$$
\end{lemma}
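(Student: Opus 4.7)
The plan is to express the sum $\sum_{x \in \supp(\mu_v): x_i = 1} \mu_v(x)$ as the total Lebesgue measure of a specific subcollection of the open intervals $A_\ell^k$, and then to invoke a tessellation argument to show this measure equals $t_i - s_i = \bar{y}_{vi}$.

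First, I would unfold the algorithm along its outer iterations $k = 0, \ldots, K-1$. Each iteration produces a vector $x^{(k)} \in \{0,1\}^{|\catalog|}$ to which mass $\tau^{(k+1)} - \tau^{(k)}$ is assigned (aggregating whenever two iterations happen to yield the same $x$). Hence
\[ \textstyle\sum_{x \in \supp(\mu_v): x_i=1} \mu_v(x) \;=\; \sum_{k\,:\,x^{(k)}_i = 1} \big(\tau^{(k+1)} - \tau^{(k)}\big). \]
By the way Line~\ref{find} sets $x_i \leftarrow 1$, we have $x^{(k)}_i = 1$ iff there exists some row $\ell \in \{0, \ldots, c_v - 1\}$ with $A_\ell^k \subset [s_i, t_i]$; moreover, by Lemma~\ref{span}, for a fixed $k$ there can be at most one such row. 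Observing that $|A_\ell^k| = \tau^{(k+1)} - \tau^{(k)}$, this allows me to rewrite the sum without double-counting as
\[ \textstyle\sum_{x \in \supp(\mu_v): x_i=1} \mu_v(x) \;=\; \sum_{(\ell,k)\,:\, A_\ell^k \subset [s_i, t_i]} |A_\ell^k|. \]

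Second, I would argue by tessellation. The family $\{A_\ell^k : \ell \in \{0, \ldots, c_v-1\},\, k \in \{0, \ldots, K-1\}\}$ consists of pairwise disjoint open intervals whose union is $[0, c_v]$ minus the finite set of ``grid points'' $\{\ell + \tau^{(k)}\}$, because $0 = \tau^{(0)} < \tau^{(1)} < \cdots < \tau^{(K)} = 1$ partitions each unit row $[\ell, \ell+1]$. By Lemma~\ref{uniqueness}, each $A_\ell^k$ is contained in exactly one closed interval $[s_j, t_j]$. Consequently, the union $\bigsqcup_{(\ell,k)\,:\, A_\ell^k \subset [s_i, t_i]} A_\ell^k$ covers all of $[s_i, t_i]$ except possibly finitely many endpoints, and therefore has Lebesgue measure $t_i - s_i$. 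Combining with the displayed identity above yields $\sum_{x \in \supp(\mu_v): x_i=1} \mu_v(x) = t_i - s_i = \bar{y}_{vi}$, which is what I need.

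The main obstacle, such as it is, is purely a matter of bookkeeping around boundary points: the intervals $A_\ell^k$ are open while $[s_i, t_i]$ is closed, so that, strictly speaking, the union of those $A_\ell^k$ contained in $[s_i, t_i]$ misses the endpoints $s_i$, $t_i$ and any internal grid points $\ell + \tau^{(k')}$ lying in $(s_i, t_i)$. These form a finite (hence measure-zero) set, so the total length is unaffected; nevertheless, I would state this cleanly to avoid off-by-one confusions. The rest of the argument is a direct bookkeeping combination of Lemmas~\ref{uniqueness} and~\ref{span} with the tessellation identity.
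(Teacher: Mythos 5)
Your proof is correct and follows essentially the same tessellation/measure argument as the paper's: both identify the marginal mass $\sum_{x:\,x_i=1}\mu_v(x)$ with the total length of the intervals $A_\ell^k$ contained in $[s_i,t_i]$, and use Lemmas~\ref{uniqueness} and~\ref{span} to justify that these intervals are exactly the ones contributing probability to allocations with $x_i=1$, each counted once. The only stylistic difference is that the paper reaches the conclusion by iteratively peeling intervals off $[s_i,t_i]$ until the remainder has measure zero, whereas you argue directly that the family $\{A_\ell^k\}$ tessellates $[0,c_v]$ up to finitely many grid points, so its restriction to $[s_i,t_i]$ has total length $t_i-s_i=\bar{y}_{vi}$.
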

\begin{proof}
Note that the sets $A_{\ell}^k$ are disjoint,  have non-empty interior, and their union $\mathcal{U}=\bigcup_{\ell,k} A_{\ell}^k$ has Borel measure (i.e., length) $c_v$. Consider an $i\in \catalog$ s.t. $\bar{y}_{vi}>0$. Then, there must be a set $A_{\ell}^k$ that intersects $[s_i,t_i]$; if not, then the union  $\mathcal{U}$ would have Borel measure at most $1-\bar{y}_{vi}$, a contradiction. As in the proof of Lemma~\eqref{uniqueness}, the set  $A_{\ell}^k$ that intersects $[s_i,t_j]$ must be a subset of $[s_i,t_j]$. Consider the remainder, i.e., the set difference $[s_i,t_i]\setminus A_{\ell}^k$. By the same argument as above, if this remainder has non-zero Borel measure, there must be an interval $A_{\ell'}^{k'}\in \mathcal{U}$, different from  $A_{\ell}^k$, that intersects it. As before, this set must be included in $[s_i,t_i]$, and as it is disjoint from $A_{\ell}^k$, it will be included in the remainder. We can therefore construct a sequence of such sets  $A_{\ell}^k\in \mathcal{U}$ that are included in $[s_i,t_i]$, so long as their remainder has non-zero Borel measure. Since there are finitely many such sets, this sequence will be finite, and when it terminates the remainder will have Borel measure zero. Hence, the union of these disjoint sets has Borel measure exactly $\bar{y}_{vi}$.

Every interval in this sequence corresponds to an allocation $x$ constructed by the algorithm s.t.~$x_i=1$. By Lemma~\ref{span}, each interval corresponds to a distinct such allocation. Moreover, since the remainder of this construction has Borel measure zero, no other set in $\mathcal{U}$ can intersect $[s_i,t_i]$. Finally, the probability of these allocations under $\mu_v$ is equal to the sum of the Borel measures of this sets; as they are disjoint, the latter is equal to the Borel measure of their union, which is $\bar{y}_{vi}$.\hspace*{\stretch{1}}\qed
\end{proof}

 \end{document}